\documentclass[10pt,journal,compsoc]{IEEEtran}
\usepackage{ifpdf, flushend,subfigure}
\DeclareUnicodeCharacter{3000}{ }

\ifCLASSINFOpdf
  \usepackage[pdftex]{graphicx}
  \graphicspath{{../pdf/}{../jpeg/}}
  \DeclareGraphicsExtensions{.pdf,.jpeg,.png}
\else
  \usepackage[dvips]{graphicx}
  \graphicspath{{../eps/}}
  \DeclareGraphicsExtensions{}
\fi
\usepackage{epstopdf}
\usepackage{multirow}
\usepackage{float}
\usepackage[cmex10]{amsmath}
\usepackage {amssymb}
\usepackage{mathtools}
\usepackage{graphicx}
\usepackage{array}
\usepackage{mdwmath}
\usepackage{mdwtab}
\usepackage{eqparbox}
\usepackage{nomencl}
\usepackage{cite}
\usepackage{algorithm}
\usepackage{algorithmic}
\usepackage{makeidx}
\usepackage{ifthen}
\usepackage{subfigure}
\usepackage{caption}
\usepackage{amsthm}
\usepackage{amsfonts}
\usepackage{pdflscape}

\newcommand{\beq}{\begin{equation}}
\newcommand{\eeq}{\end{equation}}
\newcommand{\beqn}{\begin{eqnarray}}
\newcommand{\eeqn}{\end{eqnarray}}
\newcommand{\beqno}{\begin{eqnarray*}}
\newcommand{\eeqno}{\end{eqnarray*}}
\newcommand{\bma}{\begin{displaymath}}
\newcommand{\ema}{\end{displaymath}}
\newcommand{\bnu}{\begin{enumerate}}
\newcommand{\enu}{\end{enumerate}}
\newcommand{\bce}{\begin{center}}
\newcommand{\ece}{\end{center}}
\newcommand{\btb}{\begin{tabular}}
\newcommand{\etb}{\end{tabular}}

\theoremstyle{plain}
\newtheorem{theorem}{Theorem}[section]
\newtheorem{lemma}[theorem]{Lemma}

\newtheorem{corollary}[theorem]{Corollary}
\newtheorem{assumption}[theorem]{Assumption}
\theoremstyle{definition}

\newtheorem{remark}[theorem]{Remark}

\usepackage{physics}
\usepackage{braket}
 \usepackage{url}

\begin{document}

\title{\emph{QC-C$\&$CG}: Quantum-Classical Algorithm for Two-stage Adaptive Robust Optimization}
\author{\IEEEauthorblockN{Duong The Do,~\IEEEmembership{Student Member,~IEEE}, Jiaming Cheng,~\IEEEmembership{Student Member,~IEEE}, Duong Tung Nguyen,~\IEEEmembership{Member,~IEEE}}  
\thanks{Duong The Do, Jiaming Cheng, and Duong Tung Nguyen are with the School of Electrical, Computer and Energy Engineering, Arizona State University, Tempe, AZ, United States.
Email: \textit\{duongdo, jiaming, duongnt\}@asu.edu.
(\textit{Corresponding author}: Duong The Do).
}}

\IEEEtitleabstractindextext{
\begin{abstract}
   Quantum optimization provides a promising approach for solving large-scale combinatorial problems through quadratic unconstrained binary optimization (QUBO) formulations. However, integrating QUBO-based solvers into structured optimization frameworks while preserving solution guarantees remains a fundamental challenge. This paper develops a hybrid quantum-classical column-and-constraint generation (QCCG) framework for solving two-stage adaptive robust optimization problems with binary first-stage decisions and linear recourse under polyhedral uncertainty. The proposed approach reformulates the restricted master problem as a QUBO and solves it approximately using a quantum optimizer, while retaining a classical adversarial subproblem to compute worst-case recourse and certify solution quality. We construct a constraint-preserving QUBO encoding for inequality-constrained master problems using slack variables and penalty terms, enabling general mixed-integer structures to be mapped to quantum-compatible representations. To address inexactness arising from discretization, penalty modeling, and quantum optimization, we introduce a bound-adjustment mechanism that yields valid lower and upper bounds and provides a certified stopping criterion. We show that the proposed framework generalizes classical column-and-constraint generation and retains its convergence properties when the master problem is solved exactly. Numerical experiments on two-stage robust location-transportation problems demonstrate that the proposed hybrid approach achieves solution quality comparable to classical methods while reducing the computational burden associated with solving mixed-integer master problems, highlighting the potential of hybrid quantum–-classical optimization for scalable decision-making under uncertainty. 
\end{abstract}

\begin{IEEEkeywords}
    Quantum optimization, hybrid quantum–classical algorithms, QUBO, two-stage robust optimization, decomposition algorithms, column-and-constraint generation, mixed-integer optimization.
\end{IEEEkeywords}
}
\maketitle
\section{Introduction}
\label{sec:Introduction}
Quantum computing has emerged as a promising paradigm for tackling large-scale combinatorial optimization problems that are challenging for classical techniques. In particular, a wide range of discrete optimization problems can be transformed into equivalent QUBO models \cite{Lewis2017, Slongo_2025}, which serve as a standard representation for implementation on quantum annealers and related quantum architectures. This has driven growing interest in leveraging quantum optimization to address complex decision-making problems across operations research, machine learning, and engineering systems.

Despite these advances, a key challenge remains: how to effectively integrate QUBO-based quantum optimization within structured optimization frameworks while preserving solution feasibility and optimality guarantees. Most existing approaches focus on standalone QUBO formulations or heuristic integration of quantum solvers, often lacking mechanisms for rigorous solution certification. This limitation is particularly critical in optimization applications involving decision-making under uncertainty, where guarantees on solution quality are essential.

Two-stage robust optimization (2SRO) provides a fundamental approach to modeling sequential decision-making in the presence of uncertainty, with wide-ranging applications in energy systems, transportation, and supply chains \cite{BenTal2004, ZengZhao2013, Dumouchelle2024, AronBrenner2025}. In this framework, first-stage decisions are made before uncertainty realization, while second-stage recourse actions adapt to the realized scenario. This modeling captures the sequential nature of many real-world decision processes and addresses the trade-off between robustness and flexibility.

However, solving two-stage RO problems remains computationally challenging due to their inherent \textit{min--max--min} structure. A widely used exact solution approach is column-and-constraint generation (CCG), which iteratively alternates between solving a restricted master problem and an adversarial subproblem that identifies worst-case realizations of uncertainty \cite{ZengZhao2013}. While CCG avoids explicit enumeration of the uncertainty set and provides strong convergence guarantees, its computational performance is often limited by the repeated solution of increasingly large mixed-integer master problems as new scenarios are generated, introducing additional recourse-variable blocks and associated constraints at each iteration \cite{ZengZhao2013, ManTsang2023}. Consequently, the master problem becomes the dominant computational bottleneck. 

This bottleneck motivates the integration of quantum optimization into decomposition-based algorithms.
Unlike classical computation, quantum optimization exploits quantum effects such as superposition and tunneling to enhance exploration of complex high-dimensional solution spaces, potentially enabling more efficient search in highly nonconvex and combinatorial optimization problems \cite{AlbashLidar2018, Gemeinhardt2023, Rudraksh2026, Volpe2025, Quinton2025}.  However, directly replacing the classical master solver with a quantum optimizer poses significant challenges, as quantum solutions are inherently approximate and may fail to satisfy the feasibility or optimality requirements of the original problem. Thus, integrating quantum optimization into decomposition-based algorithms must explicitly account for approximation errors while maintaining correctness and solution guarantees. 

To address this challenge, we develop a novel hybrid quantum-classical column-and-constraint generation (QCCG) algorithm to solve the 2SRO problems. The proposed hybrid approach preserves the decomposition structure of classical CCG. Specifically, the key idea is to reformulate the restricted master problem as a QUBO model and solve it approximately using a quantum optimizer, while retaining a classical subproblem for exact worst-case evaluation and solution certification. In this framework, quantum optimization efficiently generates high-quality candidate first-stage decisions, while classical optimization ensures feasibility and provides rigorous bounds. To the best of our knowledge, \textit{this is the first work to incorporate QUBO-based quantum optimization into a column-and-constraint generation framework while preserving theoretical guarantees under inexact master solutions}. Our main contributions are as follows:

\begin{itemize}
    \item We develop a hybrid QCCG algorithm for solving 2SRO problems. The proposed algorithm preserves the decomposition structure of classical CCG by solving the subproblem exactly using classical optimization, while leveraging quantum optimization to handle the combinatorial master problem.
    \item We propose a constraint-preserving QUBO encoding of the restricted master problem, enabling mixed-integer structures with scenario-dependent recourse constraints to be mapped into a quantum-compatible representation.
    \item We introduce a bound-adjustment mechanism that explicitly accounts for discretization error, penalty-model mismatch, and quantum optimization error. This yields valid lower and upper bounds and enables a certified stopping criterion despite inexact master solutions.
    \item We show that when the RMP is solved exactly, the proposed QCCG reduces to the classical CCG method and thus retains the finite-convergence property of the classical algorithm under standard assumptions. When the RMP is solved approximately at iteration $k$ within a tolerance $\delta_k$, the corrected lower bound remains valid, and the algorithm still returns an $\varepsilon$-optimal solution once the optimality gap is sufficiently small.
    \item Numerical results demonstrate that the proposed hybrid framework achieves solution quality of classical methods while significantly reducing the computational burden of solving mixed-integer master problems.
\end{itemize}
The rest of the paper is organized as follows. Section~\ref{sec:related} provides an overview of related work. Section~\ref{sec:2SRO} presents the formulation of the 2SRO problem and reviews the classical CCG method. Section~\ref{sec:QCCCG} introduces the proposed hybrid  CCG framework and the QUBO formulation of the master problem as well as convergence properties of the proposed method. Section~\ref{sec:Result} presents numerical evaluations. Finally, the paper concludes in Section \ref{sec:Conclusion}.
\section{Related Work} 
\label{sec:related}
\textbf{Quantum Optimization:} Recent advances in quantum computing have attracted significant attention as a potential approach for tackling large-scale combinatorial optimization problems \cite{AlbashLidar2018, Gemeinhardt2023, Rudraksh2026}. Specifically, many such problems can be formulated as QUBO problems or equivalent Ising models, which provide the standard interface for quantum hardware \cite{Lucas2014, Glover2019}. 
This capability has motivated growing research on applying quantum optimization techniques to combinatorial problems arising in operations research and machine learning. Among these approaches, quantum annealing is designed to solve Ising problems by evolving a quantum system toward the ground state of a problem Hamiltonian \cite{AlbashLidar2018, Kim2025, Rudraksh2026}. In parallel, gate-based variational quantum algorithms, such as the quantum approximate optimization algorithm, employ parameterized quantum circuits optimized through classical feedback to address combinatorial optimization tasks \cite{Farhi2014, Cerezo2021, Blekos2024}.  While these methods show promise, \textit{most existing work focuses on standalone QUBO formulations and does not consider how to embed them within structured optimization frameworks with theoretical guarantees.}

\noindent \textbf{Hybrid Quantum-Classical Optimization}:
To bridge this gap, hybrid quantum-classical optimization frameworks have been proposed, combining quantum solvers with classical algorithmic structures. 
One prominent example is hybrid quantum Benders decomposition, where the master problem is reformulated as a QUBO and solved using quantum hardware while the subproblem remains classical \cite{ZhaoFan2022, Duong2026}. Subsequent work has explored similar ideas across different architectures and applications \cite{Naghmouchi2024, Leenders2024}. However, \textit{existing hybrid approaches primarily focus on deterministic optimization and do not explicitly address decision-making under uncertainty.}

\noindent \textbf{Robust Optimization and CCG:} 2SRO is a widely used framework for decision-making under uncertainty.
CCG \cite{ZengZhao2013} is a standard approach for solving 2SRO and has been successfully applied in various domains, such as power systems \cite{ccgpower, ccgpower2}, computer networking \cite{ccgnet1,ccgnet2}, and transportation \cite{Faiz2022}. 
Despite its effectiveness, CCG can become computationally expensive as the master problem grows with the number of scenarios. This raises the question of whether quantum optimization can help alleviate this bottleneck.

\noindent \textbf{This work:} The existing hybrid methods lack mechanisms for handling inexact quantum solutions while preserving theoretical guarantees. Furthermore, the integration of quantum optimization into CCG for 2SRO remains largely unexplored.
This paper addresses this gap by developing a hybrid QCCG framework with certified bounds under inexact quantum master solves, providing a principled approach for integrating QUBO-based quantum optimization into decomposition-based robust optimization.

\section{Two-stage Robust Optimization}
\label{sec:2SRO}
\subsection{Problem formulation}
\label{subsec:reformulation}
We consider the standard \emph{min-max-min} formulation of the two-stage adaptive robust optimization (2SRO) problem: 
\begin{subequations}
\label{eq:2SRO_original}
\begin{align}
    \min_{\mathbf x\in \mathcal X} 
    & \Big\{ \mathbf c^\top\mathbf x + \max_{\boldsymbol\xi \in \mathcal U} \; \min_{\mathbf y \in \Omega (\mathbf x, \boldsymbol \xi)} \; \mathbf d^\top\mathbf y \Big\},
\end{align}
where the second-stage feasible region is given by:
\begin{align}
    \Omega \left(\mathbf{x}, \boldsymbol{\xi} \right) = \left\{ \mathbf{y} \in \mathbb R^m_{+}: \mathbf{C} \mathbf{x} + \mathbf{D} \mathbf{y} +  \mathbf{E} \boldsymbol{\xi} \geq \mathbf{h} \right\}.
\end{align}
\end{subequations}
Here, the first-stage decision $\mathbf x$ is selected before the uncertainty $\boldsymbol \xi$ is realized, whereas the second-stage \emph{recourse} decision $\mathbf y$ is determined after observing $\boldsymbol \xi$ to minimize the associated recourse cost. The set $\mathcal U$ represents the uncertainty set containing all possible realizations of the uncertain parameters.
The vectors $\mathbf c \in \mathbb R^n$, $\mathbf d \in \mathbb R^m$, $\mathbf h \in \mathbb R^r$, along with the matrices $\mathbf C \in \mathbb R^{r\times n}$, $\mathbf D \in \mathbb R^{r\times m}$, $\mathbf E \in \mathbb R^{r\times \ell}$ are given problem data. 
The problem~\eqref{eq:2SRO_original} seeks a solution that is robust against the worst-case realization of uncertainty. To ensure tractability, we impose the following widely-used assumptions throughout the paper \cite{ZengZhao2013}. 

\begin{assumption}[First-stage feasible region]
\label{ass:feasible_x}
    The first-stage feasible set $\mathcal X$ is a nonempty, bounded polyhedron with binary restrictions, e.g., $\mathcal X = \{ \mathbf{x} \in \{0,1\}^n : \mathbf{A} \mathbf{x} \ge \mathbf{b}; \mathbf A \in \mathbb R^{p \times n}, \mathbf b \in \mathbb R^p \}.$
\end{assumption}

\begin{assumption}[Uncertainty set]
\label{ass:uncertainty}
    $\mathcal U$ is assumed to be nonempty, compact and polyhedral, in particular $\mathcal U = \{\boldsymbol \xi: \mathbf F \boldsymbol \xi \le \mathbf g\}$ (hence, the worst-case over $\mathcal U$ is attained at an extreme point). A finite discrete set is a special case. 
\end{assumption}

\begin{assumption}[Relatively complete recourse]
\label{ass:complete_recourse}
    For every $(\mathbf x, \boldsymbol \xi) \in \mathcal X \times \mathcal U$ the second-stage feasibility set $\Omega(\mathbf x, \boldsymbol \xi)$ is nonempty (i.e., $\Omega(\mathbf x, \boldsymbol \xi) \neq \emptyset$). 
\end{assumption}

\begin{assumption}[Strong duality]
\label{ass:strong_duality}
    For every $(\mathbf x, \boldsymbol\xi) \in \mathcal X \times \mathcal U$ the inner minimization problem (recourse problem) is finite (i.e., $\min_{\mathbf{y} \in \Omega \left(\mathbf{x}, \boldsymbol{\xi} \right)} \mathbf{d}^\top \mathbf{y} > -\infty$) and satisfies strong duality.
\end{assumption}

Under these assumptions, it is convenient to introduce a recourse value function. For any fixed first-stage decision $\mathbf{x} \in \mathcal X$ and uncertainty realization $\boldsymbol \xi \in \mathcal U$, the second-stage (recourse) value function can be expressed as: 
\begin{align}
\label{eq:Q_def}
    & \mathcal Q(\mathbf x, \boldsymbol \xi) := \min_{\mathbf y \in \mathbb R^m_+} \Big\{\mathbf d^\top \mathbf y: \mathbf D \mathbf y \ge \mathbf h - \mathbf C \mathbf x - \mathbf E \boldsymbol \xi \Big\}.
\end{align}
The worst-case recourse function associated with a first-stage decision $\mathbf{x}$ is then defined as follows:
\begin{align} 
\label{eq:Phi_robust_value} 
    & \Phi(\mathbf{x}) = \max_{\boldsymbol{\xi} \in \mathcal U} \mathcal{Q}(\mathbf{x},\boldsymbol{\xi}). 
\end{align}
Hence, the original 2SRO problem~\eqref{eq:2SRO_original} can then be written equivalently via the worst-case recourse function as follows:
\begin{align}
\label{eq:2SRO_reform}
    & z^\star  = \min_{\mathbf{x} \in \mathcal X}  \mathbf{c}^\top \mathbf{x} + \Phi(\mathbf{x}).
\end{align}

The main challenge in solving ~\eqref{eq:2SRO_reform} lies in evaluating $\Phi(\mathbf x)$, which requires solving a nested worst-case optimization problem over the uncertainty set and the second-stage recourse problem. Column-and-constraint generation (CCG) offers an effective approach to address this challenge \cite{ZengZhao2013}. The main idea of CCG is to avoid enumerating all uncertainty realizations in advance and instead, it iteratively constructs a RMP that includes only those scenarios that are essential to the worst-case structure of the current solution \cite{ZengZhao2013}.

\begin{remark} \textit{ 
    Under Assumptions~\ref{ass:uncertainty}--\ref{ass:strong_duality}, the recourse value function $\mathcal{Q}(\mathbf{x},\boldsymbol{\xi})$ is convex in $\mathbf{x}$ for each fixed $\boldsymbol\xi$, and the worst-case value $\Phi(\mathbf{x}) = \max_{\boldsymbol{\xi} \in \mathcal U} \mathcal{Q}(\mathbf{x}, \boldsymbol{\xi})$ is also convex as it is the pointwise maximum of affine functions in $\mathbf{x}$. This convexity is essential to ensure the validity of the CCG}. 
\end{remark}

\subsection{Column-and-Constraint Generation}
\label{subsec:ccg}
To handle the worst-case recource function $\Phi(\mathbf x)$, we introduce an auxiliary variable $\eta$ that represents an upper bound on the worst-case recourse cost. This leads to an epigraph equivalent reformulation of~\eqref{eq:2SRO_reform} as follows:
\begin{align}
\label{eq:epigraph_model}
    & \min_{\mathbf x,\eta} \left\{ \mathbf c^\top \mathbf x + \eta : \mathbf x \in \mathcal X, \eta \ge \Phi(\mathbf x) \right\}.
\end{align}
The epigraph constraint $\eta \ge \Phi(\mathbf{x})$ ensures that the auxiliary variable $\eta$ upper-bounds the worst-case recourse value. At optimality, the bound is attained, i.e., $\eta = \Phi(\mathbf{x})$, implying that the epigraph formulation is equivalent to the original robust problem.
From~\eqref{eq:Q_def}, the epigraph constraint $\eta \ge \Phi(\mathbf x)$ can be expanded as follows:
\begin{align}
\label{eq:eta_expand}
    & \eta \ge \mathcal Q (\mathbf x, \boldsymbol \xi), \;\; \forall \boldsymbol \xi \in \mathcal U.
\end{align}
Thus, the worst-case condition is equivalent to requiring that $\eta$ upper-bound the recourse value associated with \emph{every} scenario in the uncertainty set. 
Additionally, $\mathcal Q(\mathbf x,\boldsymbol\xi)$ is the optimal value of the recourse problem \eqref{eq:Q_def}. For a fixed scenario $\boldsymbol\xi^i$, the inequality $\eta \ge \mathcal Q (\mathbf x,\boldsymbol\xi^i)$ holds if and only if there exists a recourse $\mathbf y^i$ such that:
\begin{align}
\label{eq:y_cond}
    & \Big\{ \mathbf y^i \in \mathbb R^m_+, \; \mathbf D \mathbf y^i \ge \mathbf h - \mathbf C \mathbf x - \mathbf E \, \boldsymbol\xi^i, \; \eta \ge \mathbf d^\top \mathbf y^i \Big\}.
\end{align}
Indeed, if there exists a feasible recourse decision whose cost does not exceed $\eta$ for a given scenario, then $\eta$ must be at least as large as the minimum feasible recourse cost associated with that scenario. Instead of approximating the recourse value function through dual cuts, the algorithm enforces scenario-specific feasibility and cost constraints directly by introducing recourse variables and the corresponding constraints for each generated scenario.

\subsubsection{Restricted master problem}
Under Assumption~\ref{ass:uncertainty}, we assume that the uncertainty set is finite, (i.e., $\mathcal U = \{\boldsymbol \xi^1, \boldsymbol \xi^2, \dots, \boldsymbol \xi^s \}$). 
A deterministic equivalent formulation can therefore be obtained by introducing one recourse vector $\mathbf y^i$ for each scenario $\boldsymbol\xi^i \in \mathcal U$, together with the corresponding feasibility and epigraph constraints in~\eqref{eq:y_cond}. Consequently, $\eta$ upper-bounds the recourse cost for every scenario, and the objective $\mathbf{c}^\top \mathbf{x} + \eta$ represents exactly the first-stage cost together with the worst-case recourse value.

However, in general, the uncertainty set $\mathcal U$ may be too large or even infinite, making it computationally impractical to include all scenarios explicitly. The core idea of CCG is thus to work instead with a restricted master problem (RMP) constructed only from a subset of generated scenarios. Specifically, at iteration $k$, we define $\mathcal U_k := \{\boldsymbol\xi^1, \boldsymbol\xi^2, \dots, \boldsymbol\xi^k\} \subseteq \mathcal U$. The restricted master is then constructed using only the scenarios in $\mathcal{U}_k$. In particular, for each scenario $\boldsymbol\xi^i \in \mathcal U_k$, a separate recourse block $\mathbf y^i$ is introduced to enforce both recourse feasibility and the corresponding epigraph constraint.
New scenarios are added only when the subproblem identifies a worst--case realization that is not yet represented in the current master. In this way, CCG avoids solving the full deterministic equivalent problem directly, while progressively improving the master until robustness is achieved.
Accordingly, the RMP at iteration $k$ can be formulated as:
\begin{subequations}
\label{eq:RMP}
\begin{align}
    \min_{\mathbf{x},\eta,\{\mathbf{y}^i\}_{i=1}^k} & \mathbf{c}^\top \mathbf{x} + \eta \label{eq:RMP_obj} \\
    \text{s.t.} \quad 
    & \mathbf{x} \in \mathcal X, \label{eq:RMP_1}\\
    & \mathbf{y}^i \ge \mathbf 0, \; i=1,\dots,k, \\
    & \mathbf{D}\mathbf{y}^i \;\ge\; \mathbf{h} - \mathbf{C}\mathbf{x} - \mathbf{E} \, \boldsymbol{\xi}^i, \; i=1,\dots,k, \label{eq:RMP_2}\\
    & \eta \;\ge\; \mathbf{d}^\top \mathbf{y}^i, \; i=1,\dots,k. \label{eq:RMP_3}
\end{align}
\end{subequations}
Let $(\mathbf{x}^k,\eta^k,\{\mathbf{y}^{i,k}\}_{i=1}^k)$ denote an optimal solution to \eqref{eq:RMP}, and define the lower bound as $LB_k := \mathbf{c}^\top \mathbf{x}^k + \eta^k.$
Because the RMP enforces recourse constraints only over the finite scenario set $\mathcal U^k \subseteq U$, it underestimates the full worst-case recourse function $\Phi(\mathbf{x})$, and hence $LB_k \le z^\star$.
Although we considers a finite uncertainty set, the same scenario-generation approach applies when $\mathcal{U}$ is a compact polyhedron. Since the subproblem is a LP in $\boldsymbol{\xi}$, its optimal solution is attained at an extreme point of $\mathcal{U}$. Therefore, CCG iteratively identifies and incorporates extreme scenarios that are worst-case for the current first-stage decision.

\subsubsection{Subproblem}
Given the current first-stage decision $\mathbf x^k$, the subproblem evaluates its true worst-case recourse value as:
\begin{align}
\label{eq:SP_ccg}
   & \Phi(\mathbf{x}^k) \!= \max_{\boldsymbol{\xi} \in \mathcal U} \min_{\mathbf{y} \in \mathbb R^m_+} \! \left\{ \mathbf{d}^\top \mathbf{y} : \mathbf{D}\mathbf{y} \ge \mathbf{h} - \mathbf{C}\mathbf{x}^k - \mathbf{E}\boldsymbol{\xi} \right\}. \!
\end{align}
This problem is referred to as the CCG \emph{subproblem} and it searches for the uncertainty realization that maximizes the recourse cost of the current first-stage decision. Once $\Phi(\mathbf x^k)$ is obtained, we define the corresponding upper bound as:
\begin{align}
\label{eq:UB_k}
    & UB_k := \mathbf c^\top \mathbf x^k + \Phi(\mathbf x^k).
\end{align}
Since $\mathbf x^k$ is feasible for the original problem and $\Phi(\mathbf x^k)$ is its true worst-case recourse value, $UB_k$ is an upper bound on the optimal robust objective value, i.e.,  $z^\star \le UB_k.$ 

By Assumption~\ref{ass:complete_recourse}, the second-stage is feasible for every $(\mathbf x,\boldsymbol\xi) \in \mathcal X \times \mathcal U$. Hence, the subproblem returns a finite worst-case value together with at least one worst-case scenario $\boldsymbol\xi^{k+1} \in \mathcal U$. If the current scenario set is not yet sufficient to represent the true worst-case recourse function at $\mathbf x^k$, the new scenario $\boldsymbol\xi^{k+1}$ is added to the RMP by introducing a new recourse block $\mathbf y^{k+1}$ and the following constraints:
\begin{subequations}
\begin{align}
    \mathbf{D}\mathbf{y}^{k+1} & \ge \mathbf{h} - \mathbf{C}\mathbf{x} - \mathbf{E}\boldsymbol{\xi}^{k+1}, \label{eq:new_feas_ccg} \\ 
    \eta &\ge \mathbf{d}^\top \mathbf{y}^{k+1}. \label{eq:new_epi_ccg}
\end{align}
\end{subequations}
Thus, each iteration expands the RMP by adding both a \emph{block of recourse variables} and its associated \emph{constraints}.

The algorithm terminates when the lower and upper bounds are sufficiently close, which means that the optimality gap satisfies the following inequality:
\begin{align}
\label{eq:gap_stop}
    UB_k - LB_k \le \varepsilon,
\end{align}
for a predefined tolerance $\varepsilon \ge 0$. At that point, the current first-stage solution is certified to be optimal within tolerance $\varepsilon$. We summarize the procedure in Algorithm~\ref{alg:ccg}.

\begin{lemma}[Monotonicity of bounds]
\label{lem:monotone}
    Let $LB_k$ and $UB_k$ denote the lower and upper bounds obtained at iteration $k$ in Algorithm~\ref{alg:ccg}. Then the sequence $\{LB_k\}$ is nondecreasing, and the sequence $\{\min_{j \le k} UB_j\}$ is nonincreasing.
\end{lemma}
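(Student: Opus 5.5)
The plan is to handle the two sequences separately; both claims follow from elementary set-inclusion and definition arguments.

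\textbf{Lower bounds.} For $LB_k$, let $\mathcal F_k$ denote the feasible region of the RMP \eqref{eq:RMP} at iteration $k$, projected onto $(\mathbf x,\eta)$. A point $(\mathbf x,\eta)$ lies in $\mathcal F_k$ exactly when $\mathbf x\in\mathcal X$ and, for each $i\le k$, there is a $\mathbf y^i\ge \mathbf 0$ satisfying \eqref{eq:RMP_2}--\eqref{eq:RMP_3}. Passing from iteration $k$ to $k+1$ keeps every existing block and only appends the block $\mathbf y^{k+1}$ together with the constraints \eqref{eq:new_feas_ccg}--\eqref{eq:new_epi_ccg}. Hence every $(\mathbf x,\eta)\in\mathcal F_{k+1}$ also belongs to $\mathcal F_k$: one keeps the same $\mathbf y^1,\dots,\mathbf y^k$ and discards $\mathbf y^{k+1}$. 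This gives $\mathcal F_{k+1}\subseteq\mathcal F_k$. Since both RMPs minimize the same objective $\mathbf c^\top\mathbf x+\eta$, which does not involve the $\mathbf y$ blocks, the minimum over the smaller set is at least the minimum over the larger one. Therefore $LB_{k+1}=\mathbf c^\top\mathbf x^{k+1}+\eta^{k+1}\ge LB_k$.

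\textbf{Well-posedness.} I would remark that both minima exist. $\mathcal X$ is a finite nonempty set by Assumption~\ref{ass:feasible_x}. For each fixed $\mathbf x$, the optimal $\eta$ equals $\max_{i\le k}\mathcal Q(\mathbf x,\boldsymbol\xi^i)$, which is finite and attained by Assumptions~\ref{ass:complete_recourse}--\ref{ass:strong_duality}. Consequently each RMP is feasible and has a finite optimal value, so the comparison above is between real numbers. Checking attainment is the only place requiring care, and it is routine.

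\textbf{Upper bounds.} For the upper-bound sequence, set $\overline{UB}_k:=\min_{j\le k}UB_j$, where each $UB_j$ from \eqref{eq:UB_k} is finite by the same assumptions. Then $\overline{UB}_{k+1}=\min\{\overline{UB}_k,UB_{k+1}\}\le\overline{UB}_k$, which holds trivially. Note that the raw sequence $UB_k$ need not be monotone, because $\mathbf x^k$ can change arbitrarily between iterations; this is why the statement is made for the running minimum. I expect no genuine obstacle here. The only subtle point in the whole argument is the projection step used for the lower bounds, namely that adding new variables and constraints tied to the shared $(\mathbf x,\eta)$ can only shrink the projected feasible set.
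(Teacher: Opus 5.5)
Your proof is correct, and as far as I can tell it is the standard argument for this lemma; I can't compare it line by line, because the paper defers its proof to an appendix that is not included in the source. The RMP at iteration $k+1$ keeps every scenario block and constraint of iteration $k$ and only appends the block for $\boldsymbol\xi^{k+1}$, so its feasible set projected onto $(\mathbf x,\eta)$ can only shrink, giving $LB_{k+1}\ge LB_k$; and $\min_{j\le k}UB_j$ is nonincreasing simply because it is a running minimum. Your well-posedness check (finite nonempty $\mathcal X$, plus finite and attained recourse LPs) is a sound extra step that this kind of argument usually leaves implicit.
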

\noindent \textit{Proof:} See \textit{Appendix} \ref{pro:lem_monotone}.

\begin{theorem}[Finite convergence of CCG]
\label{thm:finite_conv}
    Suppose Assumptions~\ref{ass:uncertainty}--\ref{ass:strong_duality} hold, and Algorithm~\ref{alg:ccg} is solved exactly at every iteration. Then the algorithm terminates in finitely many iterations. If $\varepsilon = 0$, the solution is globally optimal.
\end{theorem}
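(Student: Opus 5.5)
The plan follows the classical argument of Zeng and Zhao: first reduce the set of scenarios the subproblem can ever return to a finite set, then show that returning an already-included scenario forces the gap to close.

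\textbf{Step 1: finiteness of candidate scenarios.} By Assumption~\ref{ass:uncertainty}, $\mathcal U$ is a compact polyhedron, or a finite set as a special case. By Assumptions~\ref{ass:complete_recourse}--\ref{ass:strong_duality}, for fixed $\mathbf x^k$ we can dualize the inner LP in \eqref{eq:SP_ccg}. This gives
\[
\Phi(\mathbf x^k)=\max_{\boldsymbol\xi\in\mathcal U,\ \boldsymbol\pi\in\Pi}\boldsymbol\pi^\top(\mathbf h-\mathbf C\mathbf x^k-\mathbf E\boldsymbol\xi),
\]
where $\Pi=\{\boldsymbol\pi\ge \mathbf 0:\mathbf D^\top\boldsymbol\pi\le\mathbf d\}$ does not depend on $(\mathbf x,\boldsymbol\xi)$. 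The objective is bilinear and the feasible set is a product of polyhedra. Hence, for fixed $\boldsymbol\pi$, the maximum over $\boldsymbol\xi$ is attained at a vertex of $\mathcal U$. Since the value is finite, we may also take $\boldsymbol\pi$ at a vertex of $\Pi$. Both polyhedra have finitely many vertices, so we may assume the subproblem always returns $\boldsymbol\xi^{k+1}$ from the finite vertex set $V(\mathcal U)$; I take this as the convention for an exact solver. I expect this to be the main obstacle: the argument needs $\Pi$ to be pointed (it is, since $\boldsymbol\pi\ge\mathbf 0$) and the exact subproblem to return an extreme optimal scenario. I would state this vertex-returning property explicitly as part of ``solved exactly.''

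\textbf{Step 2: a repeated scenario closes the gap.} Suppose at iteration $k$ the subproblem returns $\boldsymbol\xi^{k+1}\in\mathcal U_k$. Let $\boldsymbol\xi^{k+1}=\boldsymbol\xi^i$ with $i\le k$. The RMP solution $(\mathbf x^k,\eta^k,\mathbf y^{i,k})$ satisfies \eqref{eq:RMP_2}--\eqref{eq:RMP_3} for scenario $i$, so
\[
\eta^k\ge \mathbf d^\top\mathbf y^{i,k}\ge \mathcal Q(\mathbf x^k,\boldsymbol\xi^i)=\Phi(\mathbf x^k).
\]
Therefore $LB_k=\mathbf c^\top\mathbf x^k+\eta^k\ge UB_k$. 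Combined with $LB_k\le z^\star\le UB_k$, this gives $UB_k=LB_k$, so the stopping test \eqref{eq:gap_stop} holds for any $\varepsilon\ge0$.

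\textbf{Step 3: counting iterations.} As long as the algorithm does not terminate, Step 2 shows that each iteration adds a scenario not already in $\mathcal U_k$. All these scenarios lie in $V(\mathcal U)$. Hence the number of iterations is at most $|V(\mathcal U)|+1$, which is finite.

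\textbf{Step 4: optimality when $\varepsilon=0$.} At termination, $\mathbf c^\top\mathbf x^k+\Phi(\mathbf x^k)=UB_k\le LB_k\le z^\star$. Since $\mathbf x^k\in\mathcal X$ is feasible for \eqref{eq:2SRO_reform}, it is globally optimal. Lemma~\ref{lem:monotone} is only needed to note that the reported bounds remain consistent throughout.
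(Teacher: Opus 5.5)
Your proposal is correct and matches the argument the paper relies on (its appendix proof is not included in the source, but its Assumption~\ref{ass:uncertainty} and hypothesis~3 of Theorem~\ref{thm:finite_convergence} point to the same classical Zeng--Zhao scheme): the exact subproblem returns only finitely many extreme scenarios of $\mathcal U$; a repeated scenario $\boldsymbol\xi^{k+1}=\boldsymbol\xi^i\in\mathcal U_k$ forces $\eta^k\ge \mathbf d^\top\mathbf y^{i,k}\ge\mathcal Q(\mathbf x^k,\boldsymbol\xi^{i})=\Phi(\mathbf x^k)$ and hence $LB_k\ge UB_k$, so termination occurs within one more iteration than the number of vertices of $\mathcal U$; and at $\varepsilon=0$ the bounds meet at $z^\star$. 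Your explicit ``returns a vertex'' convention is the property the paper encodes in the extreme-point remark of Assumption~\ref{ass:uncertainty}, and it could even be dropped: $\mathcal X$ is finite by Assumption~\ref{ass:feasible_x}, and your Step~2 applied to a repeated $\mathbf x^k=\mathbf x^j$ (whose worst case $\boldsymbol\xi^{j+1}$ is already in $\mathcal U_k$) bounds the number of iterations by $|\mathcal X|+1$.
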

\noindent \textit{Proof:} See \textit{Appendix} \ref{pro:thm_finite_conv}.

\begin{algorithm}[t]
\caption{Column-and-Constraint Generation Algorithm}
\label{alg:ccg}
\begin{algorithmic}[1]
\STATE \textbf{Input:} tolerance $\varepsilon \ge 0$. Initialize $\mathcal U^1 = \{\boldsymbol\xi^1\} \subset \mathcal U$ with any feasible scenario.
\FOR{$k=1,2,\dots$}
  \STATE Solve the RMP \eqref{eq:RMP} over scenarios $\{\boldsymbol\xi^i\}_{i=1}^k$ to obtain $(\mathbf x^k,\eta^k,\{\mathbf y^{i,k}\}_{i=1}^k)$.
  \STATE Set $LB_k = \mathbf c^\top \mathbf x^k + \eta^k$.
  \STATE Solve the problem \eqref{eq:SP_ccg} at $\mathbf x^k$ to obtain $\Phi(\mathbf x^k)$ and a worst-case scenario $\boldsymbol\xi^{k+1}$.
  \STATE Set $UB_k = \mathbf c^\top \mathbf x^k + \Phi(\mathbf x^k)$.
  \IF{$UB_k - LB_k \le \varepsilon$}
    \STATE \textbf{Terminate} and return $\mathbf x^k$.
  \ENDIF
  \STATE Update $\mathcal U^{k+1} = \mathcal U^k \cup \{\boldsymbol\xi^{k+1}\}$.
\ENDFOR
\end{algorithmic}
\end{algorithm}
 
\section{Hybrid Quantum--Classical Algorithm for Two-Stage Adaptive Robust Optimization}
\label{sec:QCCCG}
The main computational bottleneck of the classical CCG method lies in repeatedly solving the RMP. To address this, we propose a hybrid quantum--classical CCG framework in which the subproblem remains classical and is solved exactly, while the RMP is solved approximately using a quantum optimizer after binary encoding and QUBO reformulation. A key difficulty is that an approximate quantum master solve does not directly yield a valid lower bound for the original robust problem. Thus, unlike classical CCG, the proposed hybrid method requires a certification mechanism to quantify the inexactness of the quantum master solution.

\subsection{Quantum-assisted RMP}
\label{subsec:RMP2QUBO}
To enable the use of quantum optimization for solving the RMP, it is necessary to transform it into a QUBO representation, referred to as \emph{RMP-2-QUBO}. A critical requirement in this transformation is to preserve the global minimizers of the original problem. To achieve this, all inequality constraints are first converted into equivalent equality constraints through the introduction of nonnegative slack variables.

For each uncertainty scenario $\boldsymbol\xi^i\in\mathcal U_k$ and its corresponding recourse vector $\mathbf y^i\in\mathbb R^m$, we define the stacked recourse and uncertainty vectors as follows:
\begin{subequations}
\begin{align}
\label{eq:qc_Yk}
    \mathbf Y_k &:= \big[(\mathbf y^1)^\top , (\mathbf y^2)^\top, \cdots,  (\mathbf y^k)^\top\big]^\top \in \mathbb R^{km}, \\
\label{eq:qc_Xik}
    \boldsymbol\Xi_k &:= \big[(\boldsymbol\xi^1)^\top, (\boldsymbol\xi^2)^\top\cdots, (\boldsymbol\xi^k)^\top\big]^\top \in \mathbb R^{k\ell}. 
\end{align}
\end{subequations}
Let $\mathbf{e}_k \in \mathbb{R}^k$ denote the all-ones vector, then, using the Kronecker product $\otimes$, we introduce the block matrices as: 
\begin{subequations}
\begin{align}
\label{eq:qc_block_defs}
    \mathbf D_k &:= I_k \otimes \mathbf D \in \mathbb R^{kr\times km},\\
    \mathbf C_k &:= \mathbf e_k \otimes \mathbf C \in \mathbb R^{kr\times n},\\
    \mathbf E_k &:= I_k \otimes \mathbf E \in \mathbb R^{kr\times k\ell},\\
    \mathbf h_k &:= \mathbf e_k \otimes \mathbf h \in \mathbb R^{kr},\\
    \mathbf d_k^\top &:= I_k \otimes \mathbf d^\top \in \mathbb R^{k\times km},
\end{align}
\end{subequations}

\subsubsection{Slack reformulation}
To obtain a QUBO, we rewrite all inequality constraints~\eqref{eq:RMP_1}, \eqref{eq:RMP_2}, \eqref{eq:RMP_3} as equalities by introducing the following nonnegative slack variables $\boldsymbol\sigma_k^{x} \in \mathbb R_+^{p}, \;
    \boldsymbol\sigma_k^{r} \in \mathbb R_+^{kr}, \; 
    \boldsymbol\sigma_k^{\eta} \in \mathbb R_+^{k}, $
which correspond to the first-stage, recourse, and epigraph constraints, respectively.
Accordingly, the RMP~\eqref{eq:RMP} can be equivalently rewritten as follows:
\begin{subequations}
\label{eq:qc_RMP_slack}
\begin{align}
\label{eq:qc_RMP_obj}
    \min_{\mathbf x, \eta, \mathbf Y_k, \left\{ \boldsymbol \sigma_k \right\}} 
    & \mathbf c^\top \mathbf x + \eta \\
    \text{s.t.} \quad
    & \mathbf A \mathbf x - \mathbf b - \boldsymbol \sigma_k^{x} = \mathbf 0, \label{eq:qc_RMP_eq1} \\
    & \mathbf D_k \mathbf Y_k + \mathbf C_k \mathbf x + \mathbf E_k \boldsymbol\Xi_k - \mathbf h_k - \boldsymbol \sigma_k^{r} = \mathbf 0, \label{eq:qc_RMP_eq2}\\
    & \eta \mathbf e_k - \mathbf d_k^\top \mathbf Y_k - \boldsymbol \sigma_k^{\eta} = \mathbf 0, \label{eq:qc_RMP_eq3}\\
    & \mathbf Y_k \ge \mathbf 0, \; \mathbf x \in \left\{0, 1\right\}^n, \\
    & \boldsymbol \sigma_k^{x} \ge \mathbf 0,\;
      \boldsymbol \sigma_k^{r} \ge \mathbf 0,\;
      \boldsymbol \sigma_k^{\eta} \ge \mathbf 0.
\end{align}
\end{subequations}

\subsubsection{Binary encoding}
To obtain a quantum-compatible formulation, all continuous variables in \eqref{eq:qc_RMP_slack} are discretized through 
binary expansions. Assume $\eta \in [\underline{\eta}, \overline{\eta}]$, each component of $\mathbf{Y}_k$ satisfies $0 \le (\mathbf{Y}_k)_j \le \overline{Y}_j$, and the slack variables are bounded, i.e., $0 \le (\boldsymbol{\sigma}_k^x)_a \le \overline{\sigma}^x_a, \,
    0 \le (\boldsymbol{\sigma}_k^r)_u \le \overline{\sigma}^r_u, \,
    0 \le (\boldsymbol{\sigma}_k^\eta)_i \le \overline{\sigma}^\eta_i$. 
Let $\Delta_\eta > 0$, $\Delta^Y_j > 0$, $\Delta^x_a > 0$, $\Delta^r_u > 0$, and $\Delta^\eta_i > 0$ denote discretization steps, and let $L_\eta$, $L^Y_j$, $L^x_a$, $L^r_u$, and $L^\eta_i$ be the corresponding bit-lengths chosen to cover the prescribed ranges.
The epigraph variable $\eta$ is then encoded as follows:
\begin{align}
\label{eq:eta_binary}
    & \eta = \underline{\eta} + \Delta_\eta \sum_{\ell=0}^{L_\eta-1} 2^\ell s_\ell, \; s_\ell \in \{0,1\},
\end{align}
where the bit-length $L_\eta$ is selected such that $\underline{\eta} + \Delta_\eta (2^{L_\eta} - 1) \ge \overline{\eta}$, which is equivalent to: 
\begin{align}
    & L_\eta = \Big\lceil \log_2\!\Big( \frac{\overline{\eta} - \underline{\eta}}{\Delta_\eta} + 1 \Big) \Big\rceil.
\end{align}
Similarly, for each $j$, $a$, $u$, and $i$, we define the following binary encoding for recourse, epigraph and slack variables:
\begin{subequations}
\begin{align}
\label{eq:Y_binary}
    (\mathbf Y_k)_j &= \Delta^Y_j \! \sum_{\ell=0}^{L^Y_j-1} \! 2^\ell z_{j\ell}, \; (\boldsymbol\sigma_k^x)_a = \Delta^x_a \! \sum_{\ell=0}^{L^x_a-1} \! 2^\ell u^{x}_{a\ell}, \\
\label{eq:sigma_r_binary}
    (\boldsymbol\sigma_k^r)_u &= \Delta^r_u \! \sum_{\ell=0}^{L^r_u-1} \! 2^\ell u^{r}_{u\ell}, \; 
    (\boldsymbol\sigma_k^\eta)_i = \Delta^\eta_i \! \sum_{\ell=0}^{L^\eta_i-1} \! 2^\ell u^{\eta}_{i\ell}, 
\end{align}
\end{subequations}
where $z_{j\ell}, u^{x}_{a\ell}, u^{r}_{u\ell}, u^{\eta}_{i\ell} \in \{0,1\}$ are binary variables. The corresponding bit-lengths are selected to ensure that that the prescribed variable ranges are properly represented.

\subsubsection{RMP-2-QUBO}
Let $\mathbf v \in \{0,1\}^{N_k}$ denote the binary vector obtained by stacking all binary variables as:
\begin{align}
\label{eq:binary_vector_qc}
    & \mathbf v = \big[\mathbf x^\top, \mathbf s^\top, \mathbf z^\top, (\mathbf u^x)^\top, (\mathbf u^r)^\top, (\mathbf u^\eta)^\top \big]^\top \! \! \in \{0,1\}^{\! N_k}, 
\end{align}
where $\mathbf s$ contains the binary variables used to encode $\eta$, $\mathbf z$ collects the binary variables associated with $\mathbf Y_k$, and $\mathbf u^x$, $\mathbf u^r$, and $\mathbf u^\eta$ represent the binary variables associated with the first-stage, recourse, and epigraph slack vectors, respectively.
Each original variable can be expressed as an affine function of the variable vector $\mathbf v$. Hence, there exist a matrix $\mathbf T_k$ and a vector $\mathbf t_k$ such that:
\begin{subequations}
\label{eq:zk_affine_qc}
\begin{align}
\label{eq:zk_affine_qc1}
    \mathbf z_k
    & = \big[\mathbf x^\top, \, \eta, \, \mathbf Y_k^\top, \, (\boldsymbol\sigma_k^x)^\top, \, (\boldsymbol\sigma_k^r)^\top, \, (\boldsymbol\sigma_k^\eta)^\top \big]^\top \\
\label{eq:zk_affine_qc2}
    & = \mathbf T_k \, \mathbf v + \mathbf t_k.
\end{align}
\end{subequations} 
Moreover, linear equalities \eqref{eq:qc_RMP_eq1}--\eqref{eq:qc_RMP_eq3} can be written as:
\begin{align}
\label{eq:linear_eq_qc}
    \mathbf M_k \, \mathbf z_k = \mathbf p_k,
\end{align}
where, the matrix $\mathbf M_k$ and the vector $\mathbf p_k$ are defined as: 
\begin{subequations}
\begin{align}
\label{eq:Mk_qk_qc}
    & \mathbf p_k = \Big[\mathbf b^\top \;\; (\mathbf h_k - \mathbf E_k \boldsymbol\Xi_k)^\top \;\; \mathbf 0_k^\top \Big]^\top, \\
    & \mathbf M_k =
        \begin{bmatrix}
            \mathbf A & \mathbf 0 & \mathbf 0 & -I_p & \mathbf 0 & \mathbf 0\\
            \mathbf C_k & \mathbf 0 & \mathbf D_k & \mathbf 0 & -I_{kr} & \mathbf 0\\
            \mathbf 0 & \mathbf e_k & -\mathbf d_k^\top & \mathbf 0 & \mathbf 0 & -I_k
        \end{bmatrix}.
\end{align}
\end{subequations}
Substituting \eqref{eq:zk_affine_qc2} into \eqref{eq:linear_eq_qc} we can obtain the equivalent binary linear equality as:
\begin{align}
\label{eq:binary_eq_qc}
    \mathbf G_k \mathbf v = \mathbf g_k,
\end{align}
where $\mathbf G_k := \mathbf M_k \mathbf T_k$ and $\mathbf g_k := \mathbf p_k - \mathbf M_k \mathbf t_k$.
The encoded objective is affine in $\mathbf v$. Since the original objective~\eqref{eq:qc_RMP_obj} is $\mathbf c^\top\mathbf x+\eta$, there exist $\mathbf f_k\in\mathbb R^{N_k}$ and $\gamma_k\in\mathbb R$ such that:
\begin{align}
\label{eq:encoded_obj_qc}
    & \mathbf c^\top \mathbf x + \eta = \mathbf f_k^\top \mathbf v + \gamma_k.
\end{align}
The equality constraints \eqref{eq:binary_eq_qc} are incorporated into the objective via a quadratic penalty with non-negative parameter $\rho_k>0$. Then, we can obtain a binary optimization reformulation of the RMP~\eqref{eq:RMP}, with the penalized objective given by:
\begin{align}
\label{eq:qubo_penalty_qc}
    & \phi_k(\mathbf v) = \mathbf f_k^\top \mathbf v + \gamma_k + \rho_k \left\| \mathbf G_k \mathbf v - \mathbf g_k \right\|_2^2.
\end{align}
For sufficiently large penalty value $\rho_k$, minimization of~\eqref{eq:qubo_penalty_qc} enforces the equality constraints exactly, thereby recovering the original constrained master problem. By expanding the squared norm, the penalized master problem can then be written in standard QUBO matrix form as follows:
\begin{align}
\label{eq:qc_qubo_final}
    & \phi_k(\mathbf v) = \mathbf v^\top \mathbf Q_k \mathbf v + \mathbf q_k^\top \mathbf v + c_k,
\end{align}
where $\mathbf Q_k = \rho_k \mathbf G_k^\top \mathbf G_k$ is a symmetric matrix, $\mathbf q_k = \mathbf f_k - 2\rho_k \mathbf G_k^\top \mathbf g_k$ is a vector, and $c_k = \gamma_k + \rho_k \mathbf g_k^\top \mathbf g_k$ is a constant. The coefficients $(\mathbf Q_k, \mathbf q_k, c_k)$ depend on the original RMP data, the current scenario set $\mathcal U_k$, the discretization parameters, and the penalty weights. Problem~\eqref{eq:qc_qubo_final} defines the quantum master problem solved at iteration $k$, with the corresponding optimal value, given by:
\begin{align}
\label{eq:qc_phi_star}
    & \phi_k^\star = \min_{\mathbf v \in \{0,1\}^{N_k}} \phi_k(\mathbf v).
\end{align}
\textbf{Remark:} It is important to distinguish among the successive transformations used to construct the quantum master problem. First, the slack-variable reformulation in \eqref{eq:qc_RMP_slack} is exact, as it simply replaces inequality constraints with equivalent equality constraints and nonnegative slack variables without altering the original problem. Second, the binary encoding \eqref{eq:eta_binary}, \eqref{eq:Y_binary}, \eqref{eq:sigma_r_binary} is generally approximate because continuous variables are represented using a finite binary encoding. 
Third, the penalty-based QUBO reformulation in \eqref{eq:qubo_penalty_qc} introduces an additional source of approximation whose exactness depends on the choice of the penalty parameter $\rho_k$. Equivalence between the penalized binary model and the constrained encoded master is guaranteed only when the penalty parameter is sufficiently large. 
Thus, even if the QUBO in \eqref{eq:qc_qubo_final} is solved to global optimality, this does not imply that the original RMP is solved exactly, unless both the binary encoding and the penalty reformulation are exact.

\subsection{Bound Certification}
\label{subsec:certification_qc}
Let $\mathbf v^{k,\mathrm Q}$ denote a binary solution returned by the quantum optimizer for the QUBO~\eqref{eq:qc_qubo_final}. By decoding $\mathbf v^{k,\mathrm Q}$, we can then obtain a candidate solution, given by:
\begin{align}
    & \mathbf v^{k, \mathrm Q} \to \left( \bar{\mathbf x}^k, \bar\eta^k, \bar{\mathbf Y}_k = \{\bar{\mathbf y}^{i,k}\}_{i=1}^k \right),
\end{align}
where $\bar{\mathbf x}^k$ is the candidate first-stage decision, $\bar\eta^k$ is the decoded epigraph value, and $\bar{\mathbf y}^{i,k}$ is the decoded recourse vector associated with scenario $\boldsymbol\xi^i$. Since the QUBO is solved approximately and the master problem is obtained through a discretized reformulation of the original RMP, the decoded solution must be verified with respect to the original formulation before being incorporated into the CCG procedure.
Specifically, $\bar{\mathbf x}^k$ is verified for feasibility with respect to the original feasible set $\mathcal X$, and the decoded tuple is evaluated against the master constraints~\eqref{eq:RMP_2}--~\eqref{eq:RMP_3}. Only a first-stage solution that satisfies these feasibility conditions $\bar{\mathbf x}^k \in \mathcal X$ is passed to the corresponding subproblem.

Given a quantum feasible first-stage candidate $\bar{\mathbf x}^k \in \mathcal X$, we evaluate the true worst-case recourse value using the exact classical subproblem as follows:
\begin{align}
\label{eq:true_worst_case_qc}
    & \Phi(\bar{\mathbf x}^k) = \max_{\boldsymbol\xi \in \mathcal U} \mathcal Q(\bar{\mathbf x}^k,\boldsymbol\xi).
\end{align}
This evaluation provides a valid upper bound on the optimal robust value. Specifically, we update the upper bound as:
\begin{align}
\label{eq:valid_upper_bound_qc}
    & \overline{UB}_k := \min\left\{\overline{UB}_{k-1}, \; \mathbf c^\top \bar{\mathbf x}^k + \Phi(\bar{\mathbf x}^k) \right\}.
\end{align}
Since the worst-case evaluation~\eqref{eq:true_worst_case_qc} is exact, $\overline{UB}_k$ constitutes a valid upper bound on the robust optimal value $z^\star$.

\begin{assumption}[Discretization error]
\label{ass:qc_disc}
    Let $\widetilde{ v}_k^{\mathrm{RMP}}$ denote the exact optimum of the encoded master problem prior to penalty embedding and quantum optimization. Let $v_k^{\mathrm{RMP}}$ denote the exact optimal value of the current master problem. Then, there exist constants $L_k>0$ and $\Delta_k>0$ such that:
    \begin{align}
    \label{eq:qc_disc_stability}
        & |v_k^{\mathrm{RMP}} - \widetilde{v}_k^{\mathrm{RMP}}| \le L_k \Delta_k.
    \end{align}
\end{assumption}

\begin{assumption}[Exactness of the penalty reformulation]
\label{ass:qc_penalty_exact}
    There exists a sufficiently large penalty parameter $\rho_k$ such that every global minimizer of the penalized QUBO problem \eqref{eq:qc_qubo_final} corresponds to a feasible and optimal solution of the constrained encoded master problem.
\end{assumption}

\begin{assumption}[Quantum master suboptimality]
\label{ass:qc_anneal}
    For given a binary string $\mathbf v^{k,\mathrm Q}\in\{0,1\}^{N_k}$ obtained by the quantum annealer at iteration $k$, we have: 
    \begin{align}
    \label{eq:qc_anneal_subopt}
        & \phi_k(\mathbf v^{k,\mathrm Q}) \le \phi_k^\star + \varepsilon_k^{\mathrm{Q}},
    \end{align}
    where $\varepsilon_k^{\mathrm{Q}}\ge 0$ denotes the allowed annealing suboptimality.
\end{assumption}

\begin{assumption}[High-probability annealing guarantee]
\label{ass:qc_prob}
    For each $k$, there exists $\alpha_k \in (0,1)$ such that:
    \begin{align}
    \label{eq:qc_prob}
        & \mathbb P\!\left( \phi_k(v^{k,\mathrm Q}) \le \phi_k^\star + \varepsilon_k^{\mathrm{Q}} \right) \ge 1-\alpha_k.
    \end{align}
\end{assumption}

Since the annealer does not deterministically return a global minimizer of the QUBO, we instead assume, by Assumption~\ref{ass:qc_prob}, that at iteration $k$, it produces a binary solution whose objective value is within $\varepsilon_k^{\mathrm{Q}}$ of the true QUBO optimum with probability at least $1-\alpha_k$. Here, $\varepsilon_k^{\mathrm{Q}}$ measures the optimization error induced by the annealing process, while $\alpha_k$ is the corresponding failure probability. This assumption  provides a high-probability near-optimality guarantee for the QUBO solution, which can be incorporated into the overall master inexactness budget together with discretization and penalty-reformulation errors.

\begin{lemma}[Inexact quantum master]
\label{lem:qc_delta}
    Under Assumptions~\ref{ass:qc_disc}--\ref{ass:qc_anneal}, the decoded master objective satisfies the following bound:
    \begin{align}
    \label{eq:qc_delta_main}
        & \hat v_k \le v_k^{\mathrm{RMP}} + \delta_k,
    \end{align}
    where $ \delta_k = L_k \Delta_k + \varepsilon_k^{\mathrm{Q}}$, and $\hat v_k = \mathbf c^\top \bar{\mathbf x}^k + \bar\eta^k$ denote the decoded objective value returned by the quantum master solver. More generally, if the penalty reformulation is not exact, we have:
    \begin{align}
    \label{eq:qc_delta_general}
        & \delta_k \le L_k\Delta_k + \delta_k^{\mathrm{pen}} + \varepsilon_k^{\mathrm{Q}},
    \end{align}
    where $\delta_k^{\mathrm{pen}}\ge 0$ captures the residual penalty error.
\end{lemma}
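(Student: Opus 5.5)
The bound comes from a three-link chain: the decoded objective, then the penalized QUBO value, then the constrained encoded optimum, then the true RMP optimum. Each link is controlled by exactly one of Assumptions~\ref{ass:qc_disc}--\ref{ass:qc_anneal}.

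\emph{First link.} Write $\hat v_k = \mathbf c^\top \bar{\mathbf x}^k + \bar\eta^k = \mathbf f_k^\top \mathbf v^{k,\mathrm Q} + \gamma_k$ using the affine encoding \eqref{eq:encoded_obj_qc}. Since the penalty term in \eqref{eq:qubo_penalty_qc} is nonnegative ($\rho_k>0$), we get $\hat v_k \le \phi_k(\mathbf v^{k,\mathrm Q})$. This holds for any binary string, feasible or not, which is what makes it the natural starting point.

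\emph{Second link.} Assumption~\ref{ass:qc_anneal} gives $\phi_k(\mathbf v^{k,\mathrm Q}) \le \phi_k^\star + \varepsilon_k^{\mathrm Q}$.

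\emph{Third link.} We must show $\phi_k^\star = \widetilde v_k^{\mathrm{RMP}}$, at least as an upper bound $\phi_k^\star \le \widetilde v_k^{\mathrm{RMP}}$. The inequality $\phi_k^\star \le \widetilde v_k^{\mathrm{RMP}}$ is immediate and needs no assumption. Take an optimal encoded solution $\tilde{\mathbf v}$ of the constrained encoded master. It satisfies $\mathbf G_k \tilde{\mathbf v} = \mathbf g_k$, so the penalty vanishes and $\phi_k(\tilde{\mathbf v}) = \widetilde v_k^{\mathrm{RMP}}$; hence $\phi_k^\star \le \widetilde v_k^{\mathrm{RMP}}$. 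Assumption~\ref{ass:qc_penalty_exact} supplies the reverse direction: a global QUBO minimizer is feasible and optimal for the encoded master, so equality holds. Only the direction $\le$ is needed for \eqref{eq:qc_delta_main}.

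\emph{Fourth link.} Assumption~\ref{ass:qc_disc} gives $\widetilde v_k^{\mathrm{RMP}} \le v_k^{\mathrm{RMP}} + L_k\Delta_k$. Chaining the four links yields $\hat v_k \le v_k^{\mathrm{RMP}} + L_k\Delta_k + \varepsilon_k^{\mathrm Q}$, which is \eqref{eq:qc_delta_main}.

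\emph{General statement.} If penalty exactness fails, define $\delta_k^{\mathrm{pen}} := \max\{0, \phi_k^\star - \widetilde v_k^{\mathrm{RMP}}\}$ and insert it into the third link. This gives \eqref{eq:qc_delta_general}, with $\delta_k$ defined as the smallest constant for which \eqref{eq:qc_delta_main} holds.

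\emph{Main obstacle.} The delicate point is that the $\le$ direction above always holds, so $\delta_k^{\mathrm{pen}}$ is actually zero. The genuinely problematic effect of an insufficient $\rho_k$ is different: it makes $\phi_k^\star$ smaller than $\widetilde v_k^{\mathrm{RMP}}$, and the decoded point may then violate constraints. That matters for the \emph{lower bound} (as a correction $LB_k = \hat v_k - \delta_k$ in the next lemma), not for this upper inequality. So in the proof I would verify carefully that $\hat v_k$ is read through $\mathbf f_k^\top\mathbf v+\gamma_k$ without the penalty, so that the first link is valid. I would then remark that $\delta_k^{\mathrm{pen}}$ is included to absorb any mismatch between decoded and encoded objectives, for example from clipping or rounding during decoding.
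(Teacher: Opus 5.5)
Your proof is correct and follows the paper's chain $\hat v_k \le \phi_k(\mathbf v^{k,\mathrm Q}) \le \phi_k^\star + \varepsilon_k^{\mathrm Q} \le \widetilde v_k^{\mathrm{RMP}} + \varepsilon_k^{\mathrm Q} \le v_k^{\mathrm{RMP}} + L_k\Delta_k + \varepsilon_k^{\mathrm Q}$; the paper invokes Assumption~\ref{ass:qc_penalty_exact} to assert $\phi_k^\star = \widetilde v_k^{\mathrm{RMP}}$, whereas you correctly note that only $\phi_k^\star \le \widetilde v_k^{\mathrm{RMP}}$ is used and that it holds for every $\rho_k$ (a feasible encoded point incurs zero penalty), so $\delta_k^{\mathrm{pen}}$ can be taken to be zero in \eqref{eq:qc_delta_general}. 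One correction to your closing remark: $\overline{LB}_k^\delta = \hat v_k - \delta_k \le v_k^{\mathrm{RMP}}$ is just this inequality rearranged, so a too-small $\rho_k$ threatens the feasibility of $\bar{\mathbf x}^k$ (which the algorithm's feasibility check handles) and the tightness of the lower bound, not its validity.
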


\begin{proof}
\label{pro:lem_qc_delta}
    By Assumption~\ref{ass:qc_penalty_exact}, the optimal value of the penalized QUBO problem equals the optimal value $\widetilde{v}_k^{\mathrm{RMP}}$ of the constrained encoded RMP problem, i.e., $\phi_k^\star = \widetilde v_k^{\mathrm{RMP}}$. By Assumption~\ref{ass:qc_anneal},
    \begin{align}
    \label{eq:lem_phi_k}
        & \phi_k(\mathbf v^{k,\mathrm Q}) \le \widetilde v_k^{\mathrm{RMP}} + \varepsilon_k^{\mathrm{Q}}.
    \end{align}
    By Assumption~\ref{ass:qc_disc}, we have:
    \begin{align}
    \label{eq:lem_v_k}
        & \widetilde v_k^{\mathrm{RMP}} \le v_k^{\mathrm{RMP}} + L_k \Delta_k.
    \end{align}
    Combining these inequalities~\eqref{eq:lem_phi_k}--\eqref{eq:lem_v_k}, we can obtain:
    \begin{align}
        & \phi_k(\mathbf v^{k, \mathrm Q}) \le v_k^{\mathrm{RMP}} + L_k\Delta_k + \varepsilon_k^{\mathrm{Q}}.
    \end{align}
    After decoding, we can obtain \eqref{eq:qc_delta_main} with $\delta_k = L_k \Delta_k + \varepsilon_k^{\mathrm{Q}}$. If the penalty reformulation is inexact, an additional penalty error term $\delta_k^{\mathrm{pen}}$ must be included.
    Since the decoded objective $\hat v_k = \mathbf c^\top \bar{\mathbf x}^k + \bar\eta^k$ is the original objective component of the decoded solution, we have $\hat v_k \le \phi_k(\mathbf v^{k,\mathrm Q})$.
\end{proof}

Lemma~\ref{lem:qc_delta} establishes that the decoded master objective value exceeds the exact optimal value of the RMP by at most $\delta_k$, the total master-solve error, including the optimization error of the quantum solver, the discretization error induced by binary encoding, and the penalty-model mismatch introduced by the QUBO reformulation. Since the RMP considers only the scenario subset $\mathcal U_k$, it is a relaxation of the full robust optimization problem. Consequently,
\begin{align}
\label{eq:rmp_relaxation_bound_qc}
    v_k^{\mathrm{RMP}} \le z^\star.
\end{align}
Combining \eqref{eq:qc_delta_main} and \eqref{eq:rmp_relaxation_bound_qc}, define a certified lower bound as:
\begin{align}
\label{eq:inexact_lower_bound_qc}
    \overline{LB}_k^\delta
    :=
    \hat v_k - \delta_k
    =
    \mathbf c^\top \bar{\mathbf x}^k + \bar\eta^k - \delta_k,
\end{align}
which satisfies $\overline{LB}_k^\delta \le z^\star$.

\begin{lemma}[Certified bounds]
\label{lem:hybrid_bounds_qc}
    Suppose that, at iteration $k$, the decoded first-stage candidate $\bar{\mathbf x}^k$ satisfies $\bar{\mathbf x}^k \in \mathcal X$, the worst-case recourse value $\Phi(\bar{\mathbf x}^k)$ is computed exactly, and the inexactness tolerance $\delta_k$ satisfies~\eqref{eq:qc_delta_general}. Then, we have:
    \begin{align}
        & \overline{LB}_k^{\delta} \le z^\star \le \overline{UB}_k,
    \end{align}
    where $\overline{LB}_k^{\delta}$ and $\overline{UB}_k$ are given by~\eqref{eq:inexact_lower_bound_qc} and~\eqref{eq:valid_upper_bound_qc}.
\end{lemma}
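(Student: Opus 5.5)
The plan is to prove the two inequalities separately, since they come from independent mechanisms: the upper bound comes from classical feasibility plus exact worst-case evaluation, and the lower bound comes from Lemma~\ref{lem:qc_delta} combined with the relaxation property of the RMP.

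\emph{Upper bound.} We proceed by induction on $k$, with the convention $\overline{UB}_0 = +\infty$.
\begin{itemize}
\item For any $\mathbf x \in \mathcal X$, the definition \eqref{eq:2SRO_reform} gives $z^\star \le \mathbf c^\top \mathbf x + \Phi(\mathbf x)$, because $z^\star$ is the minimum of this function over $\mathcal X$.
\item By hypothesis $\bar{\mathbf x}^k \in \mathcal X$, and $\Phi(\bar{\mathbf x}^k)$ is computed exactly through \eqref{eq:true_worst_case_qc}. By Assumptions~\ref{ass:complete_recourse}--\ref{ass:strong_duality} this value is finite and attained, so $z^\star \le \mathbf c^\top \bar{\mathbf x}^k + \Phi(\bar{\mathbf x}^k)$.
\item If $z^\star \le \overline{UB}_{k-1}$ holds from the previous iterations, then $z^\star$ is bounded by both arguments of the minimum in \eqref{eq:valid_upper_bound_qc}, so $z^\star \le \overline{UB}_k$.
\item If at some earlier iteration the decoded point was infeasible and no update was made, $\overline{UB}$ is simply carried over, and the claim still holds.
\end{itemize}

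\emph{Lower bound.} The argument has three steps.
\begin{itemize}
\item First, establish $v_k^{\mathrm{RMP}} \le z^\star$. Let $\mathbf x^\star$ be optimal for \eqref{eq:2SRO_reform}. For each $\boldsymbol\xi^i \in \mathcal U_k$, let $\mathbf y^i$ be an optimal recourse for $\mathcal Q(\mathbf x^\star,\boldsymbol\xi^i)$; it exists by Assumptions~\ref{ass:complete_recourse}--\ref{ass:strong_duality}. Set $\eta = \max_i \mathbf d^\top \mathbf y^i \le \Phi(\mathbf x^\star)$. This tuple is feasible for \eqref{eq:RMP}, and its objective is at most $\mathbf c^\top\mathbf x^\star + \Phi(\mathbf x^\star) = z^\star$. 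Hence $v_k^{\mathrm{RMP}} \le z^\star$, which is \eqref{eq:rmp_relaxation_bound_qc}.
\item Second, obtain $\hat v_k \le v_k^{\mathrm{RMP}} + \delta_k$. This follows from Lemma~\ref{lem:qc_delta}. In its general form, the chain of inequalities in that proof gives $\hat v_k \le v_k^{\mathrm{RMP}} + L_k\Delta_k + \delta_k^{\mathrm{pen}} + \varepsilon_k^{\mathrm{Q}}$, and $\delta_k$ is chosen as an admissible bound of the form \eqref{eq:qc_delta_general}.
\item Third, combine the two. From $\hat v_k - \delta_k \le v_k^{\mathrm{RMP}} \le z^\star$ we get $\overline{LB}_k^\delta \le z^\star$ by \eqref{eq:inexact_lower_bound_qc}.
\end{itemize}

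\emph{Main obstacle: the direction of the inequality.} The subtle point is which way the inequality needs to run. A certified \emph{lower} bound requires $\hat v_k - \delta_k \le v_k^{\mathrm{RMP}}$, so $\delta_k$ must upper-bound the excess $\hat v_k - v_k^{\mathrm{RMP}}$. Lemma~\ref{lem:qc_delta} bounds exactly this quantity, and that is precisely what is needed here; no two-sided estimate is required.

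The hypothesis ``$\delta_k$ satisfies \eqref{eq:qc_delta_general}'' is ambiguous, since it reads as an upper bound on $\delta_k$. I will interpret it as saying that $\delta_k$ is a valid inexactness budget: it dominates the sum of the discretization, penalty and annealing errors, so that \eqref{eq:qc_delta_main} holds. I will state this interpretation explicitly in the proof.

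I also note that, under only the high-probability guarantee of Assumption~\ref{ass:qc_prob}, the lower bound holds with probability at least $1-\alpha_k$. The deterministic statement relies on Assumption~\ref{ass:qc_anneal}.
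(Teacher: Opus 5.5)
Your proof is correct and follows the same two-part route as the paper: the upper bound comes from feasibility of $\bar{\mathbf x}^k$, exact evaluation of $\Phi(\bar{\mathbf x}^k)$ and the running-minimum update, and the lower bound comes from \eqref{eq:qc_delta_main} combined with the relaxation property $v_k^{\mathrm{RMP}} \le z^\star$. Your induction on $\overline{UB}_{k-1}$, your explicit feasible-point construction for the relaxation, and your reading of the $\delta_k$ hypothesis as ``$\delta_k$ dominates the total master error'' only make explicit what the paper's proof uses implicitly; in particular, the paper's proof invokes \eqref{eq:qc_delta_main}, not the literal upper bound \eqref{eq:qc_delta_general}.
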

\begin{proof}
    Since $\bar{\mathbf x}^k \in \mathcal X$, the value $\mathbf c^\top \bar{\mathbf x}^k + \Phi(\bar{\mathbf x}^k)$ is the exact robust objective value associated with the feasible first-stage decision $\bar{\mathbf x}^k$. By definition of the robust optimum $z^\star$, we therefore have:
    \begin{align}
        z^\star \le \mathbf c^\top \bar{\mathbf x}^k + \Phi(\bar{\mathbf x}^k),
    \end{align}
    By the update rule~\eqref{eq:valid_upper_bound_qc}, it follows that:
    \begin{align}
        z^\star \le \overline{UB}_k.
    \end{align}
    For the lower bound, inequality~\eqref{eq:qc_delta_main} implies that:
    \begin{align}
        & \hat v_k - \delta_k \le v_k^{\mathrm{RMP}}.
    \end{align}
    Since the RMP is a relaxation of the original robust problem (it considers only a subset of scenarios), its optimal value satisfies $v_k^{\mathrm{RMP}} \le z^\star$. Therefore, we have:
    \begin{align}
        & \overline{LB}_k^{\delta} = \hat v_k - \delta_k \le z^\star.
    \end{align}
    Combining the upper- and lower-bound inequalities we can obtain the resulting certified bound relation.
\end{proof}

\begin{theorem}[$\varepsilon$-optimality of hybrid CCG]
\label{thm:qc_eps}
    Assume that the decoded first-stage solutions are feasible, the subproblem is solved exactly at each iteration, and the algorithm terminates at iteration $K$ such that:
    \begin{align}
    \label{eq:qc_stop}
        & \overline{UB}_K - \overline{LB}_K^\delta \le \varepsilon,
    \end{align}
    then any solution
    \begin{align}
    \label{eq:qc_best}
        & \hat{\mathbf x} \in \arg\min_{1\le t\le K} \left\{ \mathbf c^\top \bar{\mathbf x}^t + \Phi(\bar{\mathbf x}^t) \right\}
    \end{align}
    satisfies:
    \begin{align}
    \label{eq:qc_eps_opt}
        & \mathbf c^\top \hat{\mathbf x} + \Phi(\hat{\mathbf x}) \le z^\star + \varepsilon.
    \end{align}
    Hence, $\hat{\mathbf x}$ is $\varepsilon$-optimal for the original robust problem.
\end{theorem}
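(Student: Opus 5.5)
The argument combines the certified bounds of Lemma~\ref{lem:hybrid_bounds_qc} at the terminal iteration $K$ with the definition of the running upper bound \eqref{eq:valid_upper_bound_qc}.

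\textbf{Step 1: identify $\overline{UB}_K$.} Initialize with $\overline{UB}_0 = +\infty$ and unroll the recursion \eqref{eq:valid_upper_bound_qc}. This gives
\begin{align*}
\overline{UB}_K = \min_{1\le t\le K}\{\mathbf c^\top \bar{\mathbf x}^t + \Phi(\bar{\mathbf x}^t)\},
\end{align*}
so any $\hat{\mathbf x}$ chosen as in \eqref{eq:qc_best} satisfies $\mathbf c^\top \hat{\mathbf x} + \Phi(\hat{\mathbf x}) = \overline{UB}_K$. Each $\bar{\mathbf x}^t$ is assumed feasible and each $\Phi(\bar{\mathbf x}^t)$ is computed exactly, so every term in this minimum is finite. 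By Assumption~\ref{ass:complete_recourse} and Assumption~\ref{ass:strong_duality}, $\Phi$ is finite on $\mathcal X$. Hence the minimum is attained and $\hat{\mathbf x} \in \mathcal X$ is well defined.

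\textbf{Step 2: certify the lower bound.} Apply Lemma~\ref{lem:hybrid_bounds_qc} at iteration $K$. Its hypotheses are that $\bar{\mathbf x}^K \in \mathcal X$, that $\Phi(\bar{\mathbf x}^K)$ is computed exactly, and that $\delta_K$ satisfies \eqref{eq:qc_delta_general}; the last holds via Lemma~\ref{lem:qc_delta}. The lemma gives $\overline{LB}_K^\delta \le z^\star$.

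\textbf{Step 3: chain the inequalities.} Combining Steps~1 and~2 with the stopping rule \eqref{eq:qc_stop}:
\begin{align*}
\mathbf c^\top \hat{\mathbf x} + \Phi(\hat{\mathbf x}) = \overline{UB}_K \le \overline{LB}_K^\delta + \varepsilon \le z^\star + \varepsilon,
\end{align*}
which is \eqref{eq:qc_eps_opt}. Since $\hat{\mathbf x}\in\mathcal X$, it is also true that $z^\star \le \mathbf c^\top \hat{\mathbf x} + \Phi(\hat{\mathbf x})$, so $\hat{\mathbf x}$ is $\varepsilon$-optimal.

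\textbf{Main obstacle.} The only delicate point is the validity of $\overline{LB}_K^\delta$. It depends on $\delta_K$ genuinely upper-bounding the master inexactness, and hence on Assumption~\ref{ass:qc_anneal} holding at iteration $K$. Under the probabilistic version, Assumption~\ref{ass:qc_prob}, the conclusion holds only with probability at least $1-\alpha_K$, since only the last iteration's lower bound is used. I would add this as a remark rather than build it into the proof.
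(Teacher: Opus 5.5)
Your proof is correct and is essentially the paper's argument: you apply Lemma~\ref{lem:hybrid_bounds_qc} at iteration $K$, identify $\mathbf c^\top\hat{\mathbf x}+\Phi(\hat{\mathbf x})=\overline{UB}_K$ (which you justify more explicitly than the paper by unrolling \eqref{eq:valid_upper_bound_qc} from $\overline{UB}_0=+\infty$), and chain this with the stopping rule \eqref{eq:qc_stop}. One caution, about your closing remark only: $K$ is a random stopping index, and an annealing failure at iteration $k$ inflates $\hat v_k$ and hence $\overline{LB}_k^\delta$, which makes termination more likely, so you cannot apply the single-iteration guarantee at $K$ to obtain probability $1-\alpha_K$ --- the failure probability at the stopping iteration is in general only bounded by $\sum_k \alpha_k$ and can far exceed $\alpha_K$, which is why Theorem~\ref{thm:qc_prob} uses a union bound.
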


\begin{proof}
    By Lemma~\ref{lem:hybrid_bounds_qc},
    \begin{align}
        & \overline{LB}_K^\delta \le z^\star \le \overline{UB}_K.
    \end{align}
    It follows that:
    \begin{align}
    \label{eq:UB_LB_K}
        & \overline{UB}_K - z^\star \le \overline{UB}_K - \overline{LB}_K^\delta \le \varepsilon.
    \end{align}
    By the definition of $\hat{\mathbf x}$ in~\eqref{eq:qc_best}, we obtain:
    \begin{align}
    \label{eq:hat_x_qc}
        & \mathbf c^\top \hat{\mathbf x} + \Phi(\hat{\mathbf x}) = \overline{UB}_K.
    \end{align}
    Substituting~\eqref{eq:hat_x_qc} into the inequality~\eqref{eq:UB_LB_K} yields \eqref{eq:qc_eps_opt}.
\end{proof}

\begin{theorem}[High-probability certificate]
\label{thm:qc_prob}
    Suppose Assumptions~\ref{ass:qc_disc} and \ref{ass:qc_prob} hold, and the penalty reformulation is exact. If the algorithm terminates at iteration $K$ with \eqref{eq:qc_stop}, then, with probability at least $1 - \sum_{k=1}^{K} \alpha_k$, the returned solution $\hat{\mathbf x}$ satisfies the following:
    \begin{align}
    \label{eq:qc_prob_cert}
        & \mathbf c^\top \hat{\mathbf x} + \Phi(\hat{\mathbf x}) \le z^\star + \varepsilon.
    \end{align}
\end{theorem}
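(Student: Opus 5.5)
The plan is to reduce the probabilistic statement to the deterministic Theorem~\ref{thm:qc_eps} on a good event, and to bound the probability of the complement by a union bound. For each iteration $k \le K$, define the event
\begin{align*}
    \mathcal E_k := \left\{ \phi_k(\mathbf v^{k,\mathrm Q}) \le \phi_k^\star + \varepsilon_k^{\mathrm Q} \right\}.
\end{align*}
By Assumption~\ref{ass:qc_prob}, $\mathbb P(\mathcal E_k^c) \le \alpha_k$. Let $\mathcal E := \bigcap_{k=1}^K \mathcal E_k$. Boole's inequality then gives
\begin{align*}
    \mathbb P(\mathcal E^c) \le \sum_{k=1}^K \mathbb P(\mathcal E_k^c) \le \sum_{k=1}^K \alpha_k,
\end{align*}
so $\mathbb P(\mathcal E) \ge 1 - \sum_{k=1}^K \alpha_k$. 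No independence across iterations is needed. This matters because the QUBO at iteration $k$ depends on scenarios generated from earlier random outputs. I would state the assumption as a bound holding conditionally on the history, or simply for each $k$, whichever suffices for Boole.

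Next, I work on $\mathcal E$. There, Assumption~\ref{ass:qc_anneal} holds at every iteration $1,\dots,K$. Together with Assumption~\ref{ass:qc_disc} and the exact penalty reformulation (Assumption~\ref{ass:qc_penalty_exact}), Lemma~\ref{lem:qc_delta} applies for each $k$, so $\hat v_k \le v_k^{\mathrm{RMP}} + \delta_k$. Lemma~\ref{lem:hybrid_bounds_qc} then gives $\overline{LB}_k^\delta \le z^\star \le \overline{UB}_k$; in particular this holds at $k = K$. The upper bound does not depend on $\mathcal E$ at all, since the subproblem is exact and $\bar{\mathbf x}^k \in \mathcal X$ is checked before being passed on.

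Finally, on $\mathcal E$ the stopping rule \eqref{eq:qc_stop} combined with these bounds yields $\overline{UB}_K - z^\star \le \varepsilon$. Since the returned $\hat{\mathbf x}$ attains $\overline{UB}_K$ by \eqref{eq:qc_best} and the update rule \eqref{eq:valid_upper_bound_qc}, the argument of Theorem~\ref{thm:qc_eps} gives \eqref{eq:qc_prob_cert}. So $\mathcal E \subseteq \{\text{\eqref{eq:qc_prob_cert} holds}\}$, and the probability bound follows.

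The main subtlety is that $K$ is itself random, since it is a stopping time. My first approach is to read the statement as conditional on termination at a given $K$, or to note that $\sum_{k=1}^K \alpha_k$ only grows with $K$, so the union bound over the realized iterations is still valid. If more care is needed, I would index events over all $k \le K_{\max}$ and bound with the deterministic sum.
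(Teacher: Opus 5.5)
Your proposal is correct and follows the same route as the paper. The paper applies a union bound over the per-iteration events from Assumption~\ref{ass:qc_prob}. On their intersection it invokes Lemma~\ref{lem:qc_delta}, and then Lemma~\ref{lem:hybrid_bounds_qc} and Theorem~\ref{thm:qc_eps} give \eqref{eq:qc_prob_cert}.

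You handle the random stopping index $K$ more carefully than the paper, which ignores it. However, only two of your options are sound: the deterministic cap $K_{\max}$, or a history-conditional version of the assumption, which yields the bound $\mathbb E[\sum_{k\le K}\alpha_k]$. Conditioning on $\{K=K_0\}$ is not safe, because a bad anneal that inflates $\hat v_k$ can trigger early termination, and conditioning then raises the failure probability.
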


\begin{proof}
    By a union bound, the annealing guarantee \eqref{eq:qc_prob} holds simultaneously for all iterations $1,\dots,K$ with probability at least $1-\sum_{k=1}^K \alpha_k$. On this event, Lemma~\ref{lem:qc_delta} implies the inexact master relation \eqref{eq:qc_delta_main} at every iteration. Therefore the conditions of Theorem~\ref{thm:qc_eps} hold, and the claimed probabilistic certificate follows.
\end{proof}

It is obvious that similar to the classical CCG, the hybrid algorithm terminates when the certified optimality gap satisfies the following bound:
\begin{align}
\label{eq:hybrid_stop_qc_rewrite}
    \overline{UB}_k - \overline{LB}_k^{\delta} \le \varepsilon.
\end{align}
At termination, the first-stage solution is certified to be $\varepsilon$-optimal. The validity of the upper and lower bounds is summarized in the following Lemma.

From Lemma~\ref{lem:hybrid_bounds_qc}, it is obvious that, if the stopping condition~\eqref{eq:hybrid_stop_qc_rewrite} is satisfied, then the current first-stage decision attains a robust objective value within $\varepsilon$ of the true optimum. Therefore, although the master problem is solved approximately on a quantum device, the algorithm maintains solution certification. In particular, if the lower-bound tolerance $\delta_k$ is chosen in accordance with~\eqref{eq:qc_delta_main}, the bound relation remains valid and $\varepsilon$-optimality is guaranteed.

\begin{corollary}[Exact-master special case]
\label{cor:exact_master}
Suppose that, for every iteration $k$,
\begin{enumerate}
    \item the binary encoding is exact, i.e., $L_k\Delta_k=0$;
    \item the penalty reformulation is exact, i.e., $\delta_k^{\mathrm{pen}}=0$;
    \item the quantum optimizer returns a global minimizer of the QUBO, i.e., $\varepsilon_k^{\mathrm{Q}}=0$.
\end{enumerate}
Equivalently, the master problem is solved exactly at every iteration, i.e., $\delta_k = 0$ for all $k$. Then, we have:
\begin{align}
    & \overline{LB}_k^\delta = v_k^{\mathrm{RMP}},
\end{align}
and the hybrid QCCG reduces to classical CCG.
\end{corollary}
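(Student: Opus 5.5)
The plan is to show two things. First, under the three exactness conditions the decoded master solution is a genuine optimal solution of the RMP~\eqref{eq:RMP}, so that $\hat v_k = v_k^{\mathrm{RMP}}$. Second, since $\delta_k=0$, the certified lower bound $\overline{LB}_k^\delta=\hat v_k-\delta_k$ coincides with $v_k^{\mathrm{RMP}}$, which is exactly the classical $LB_k$ of Algorithm~\ref{alg:ccg}.

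For the first step I would trace the chain of reformulations in Section~\ref{subsec:RMP2QUBO}.

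\begin{itemize}
\item The slack reformulation~\eqref{eq:qc_RMP_slack} is exact, as noted in the Remark.
\item Condition~1 ($L_k\Delta_k=0$) and Assumption~\ref{ass:qc_disc} give $\widetilde v_k^{\mathrm{RMP}}=v_k^{\mathrm{RMP}}$. I will interpret exact encoding as meaning that an optimal solution of~\eqref{eq:qc_RMP_slack} is representable on the grid, so that the encoded master has the same optimal value.
\item Condition~2 together with Assumption~\ref{ass:qc_penalty_exact} implies that every global minimizer $\mathbf v^\star$ of $\phi_k$ satisfies $\mathbf G_k\mathbf v^\star=\mathbf g_k$. Hence $\phi_k^\star=\mathbf f_k^\top\mathbf v^\star+\gamma_k=\widetilde v_k^{\mathrm{RMP}}$ by~\eqref{eq:encoded_obj_qc}.
\item Condition~3 says $\mathbf v^{k,\mathrm Q}$ is such a global minimizer.
\end{itemize}

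Decoding via~\eqref{eq:zk_affine_qc2} then yields $(\bar{\mathbf x}^k,\bar\eta^k,\bar{\mathbf Y}_k)$ satisfying~\eqref{eq:linear_eq_qc} with nonnegative slacks. It is therefore feasible for~\eqref{eq:RMP}, and its objective satisfies $\hat v_k=\mathbf c^\top\bar{\mathbf x}^k+\bar\eta^k=\phi_k^\star=v_k^{\mathrm{RMP}}$.

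The key point is to obtain equality here, not merely the upper bound $\hat v_k\le v_k^{\mathrm{RMP}}$ from Lemma~\ref{lem:qc_delta}. The reverse inequality $\hat v_k\ge v_k^{\mathrm{RMP}}$ follows because the decoded point is feasible for the RMP, and $v_k^{\mathrm{RMP}}$ is the RMP minimum.

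With $\delta_k=0$ (the bound~\eqref{eq:qc_delta_general} with all three terms zero), this gives $\overline{LB}_k^\delta=\hat v_k=v_k^{\mathrm{RMP}}$.

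For the reduction to classical CCG, I would argue by induction on $k$. At each iteration the decoded $(\bar{\mathbf x}^k,\bar\eta^k)$ is an optimal RMP solution, so it is admissible as the $(\mathbf x^k,\eta^k)$ of Algorithm~\ref{alg:ccg}, line~3. The subproblem is identical in both algorithms, so it returns the same $\Phi(\bar{\mathbf x}^k)$ and the same worst-case scenario, and the scenario sets $\mathcal U_{k+1}$ evolve identically. The stopping test~\eqref{eq:hybrid_stop_qc_rewrite} then matches~\eqref{eq:gap_stop}, with one caveat: $\overline{UB}_k$ is the running minimum of the $UB_j$. By Lemma~\ref{lem:monotone}, this running minimum is the natural incumbent bound in classical CCG, and using it can only trigger termination earlier with the same certificate. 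The finite-convergence result of Theorem~\ref{thm:finite_conv} then transfers directly.

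I expect the main obstacle to be making ``exact encoding'' precise. Assumption~\ref{ass:qc_disc} only bounds the gap in optimal values, so I must ensure feasibility of the decoded point in the original (not just encoded) RMP. Decoded points always lie on the grid and satisfy the equalities with nonnegative slacks, so feasibility in the original RMP holds automatically. This is the fact I would lean on.
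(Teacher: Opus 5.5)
Your proposal is correct and follows essentially the same route as the paper. You get $\hat v_k \le v_k^{\mathrm{RMP}}$ from the vanishing error terms (re-deriving the chain behind Lemma~\ref{lem:qc_delta} rather than citing it) and the reverse inequality from feasibility of the decoded point, conclude $\overline{LB}_k^\delta = v_k^{\mathrm{RMP}}$, and then argue that identical master solutions, subproblems and bound updates make QCCG coincide with classical CCG. Your version is a bit more careful than the paper's: you justify feasibility of the decoded point via the nonnegative slack encodings, treat it as an admissible optimal RMP solution rather than only matching optimal values, and note that the running-minimum $\overline{UB}_k$ can only trigger termination earlier with the same certificate.
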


\begin{proof}
Fix an iteration $k$. Under the stated assumptions, all approximation errors in the quantum master vanish, and therefore:
\begin{align}
    & \delta_k = L_k\Delta_k + \delta_k^{\mathrm{pen}} + \varepsilon_k^{\mathrm{Q}} = 0.
\end{align}
By the aggregate master inexactness Lemma~\ref{lem:qc_delta}, we have:
\begin{align}
    & \hat v_k \le v_k^{\mathrm{RMP}} + \delta_k = v_k^{\mathrm{RMP}}.
\end{align}
On the other hand, $\hat v_k$ is the objective value of a feasible solution of the current restricted master problem, while $v_k^{\mathrm{RMP}}$ is the minimum objective value over all feasible solutions of that problem. Hence:
\begin{align}
    & v_k^{\mathrm{RMP}} \le \hat v_k.
\end{align}
Combining the two inequalities we can obtain:
\begin{align}
    & \hat v_k = v_k^{\mathrm{RMP}}.
\end{align}
Since $\delta_k=0$, the certified lower bound becomes:
\begin{align}
    & \overline{LB}_k^\delta = \hat v_k - \delta_k = \hat v_k = v_k^{\mathrm{RMP}}.
\end{align}
Therefore, at every iteration, the lower-bound update in the hybrid method is exactly the same as in classical CCG.

Moreover, the upper-bound update is also identical to that of classical CCG, because it is obtained by solving the same exact classical subproblem at the current first-stage solution. Consequently, both the lower-bound and upper-bound updates coincide with those of classical CCG at every iteration. It follows that the hybrid and classical CCG generate the same sequence of master problems, the same bound updates, and therefore the same algorithmic procedure. Hence, the hybrid method reduces to classical CCG.
\end{proof}

\begin{theorem}[Finite convergence in the exact-master case]
\label{thm:finite_convergence}
Assume that:
\begin{enumerate}
    \item the subproblem is solved exactly at every iteration,
    \item the hypotheses of Corollary~\ref{cor:exact_master} hold,
    \item the set of scenarios that can be generated by the subproblem is finite.
\end{enumerate}
Then the hybrid QCCG terminates in finitely many iterations with an optimal solution of the original robust problem.
\end{theorem}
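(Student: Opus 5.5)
The plan is to reduce the statement to the classical finite-convergence result, Theorem~\ref{thm:finite_conv}, through Corollary~\ref{cor:exact_master}. Hypothesis 2 makes the three error sources vanish, so $\delta_k=0$ for every $k$. The corollary then gives $\overline{LB}_k^\delta=\hat v_k=v_k^{\mathrm{RMP}}$, and the decoded candidate $(\bar{\mathbf x}^k,\bar\eta^k,\bar{\mathbf Y}_k)$ is an optimal solution of the RMP~\eqref{eq:RMP} over $\mathcal U_k$. Since the subproblem is solved exactly (hypothesis 1), the upper bound $\mathbf c^\top\bar{\mathbf x}^k+\Phi(\bar{\mathbf x}^k)$ and the generated scenario $\boldsymbol\xi^{k+1}$ coincide with those produced by Algorithm~\ref{alg:ccg} at the same master optimum. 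Hence the QCCG iterates are a valid run of classical CCG, and it suffices to argue finite termination for that run.

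For finiteness I would make the standard CCG argument self-contained, using hypothesis 3 rather than the extreme-point finiteness implicit in Theorem~\ref{thm:finite_conv}. The key claim is the following: if at some iteration $k$ the new worst-case scenario $\boldsymbol\xi^{k+1}$ already belongs to $\mathcal U_k$, then the algorithm stops with a zero gap. Indeed, optimality and feasibility in the RMP give recourse blocks with $\bar\eta^k\ge \mathbf d^\top\bar{\mathbf y}^{i,k}\ge\mathcal Q(\bar{\mathbf x}^k,\boldsymbol\xi^i)$ for all $\boldsymbol\xi^i\in\mathcal U_k$. In particular, $\bar\eta^k\ge\mathcal Q(\bar{\mathbf x}^k,\boldsymbol\xi^{k+1})=\Phi(\bar{\mathbf x}^k)$. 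This yields
\[
\mathbf c^\top\bar{\mathbf x}^k+\Phi(\bar{\mathbf x}^k)\le \hat v_k=\overline{LB}_k^\delta .
\]
Combined with Lemma~\ref{lem:hybrid_bounds_qc}, it gives $\overline{UB}_k\le\overline{LB}_k^\delta$, so the stopping test~\eqref{eq:hybrid_stop_qc_rewrite} holds for every $\varepsilon\ge0$.

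It follows that every non-terminating iteration adds a genuinely new scenario. Because the set of scenarios the subproblem can return is finite, say of cardinality $S$, the algorithm can run at most $S+1$ iterations before a repeat forces termination. At termination Lemma~\ref{lem:hybrid_bounds_qc} gives $\overline{LB}_K^\delta\le z^\star\le\overline{UB}_K\le\overline{LB}_K^\delta$, so all three quantities are equal. By Theorem~\ref{thm:qc_eps} applied with $\varepsilon=0$ (whose hypotheses hold, since decoded solutions lie in $\mathcal X$ when the master is exact), the returned $\hat{\mathbf x}$ attains $z^\star$ and is globally optimal.

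The main obstacle I expect is the step that identifies the decoded QUBO minimizer with a genuine feasible optimal RMP solution. Only then are $\bar{\mathbf x}^k\in\mathcal X$ and the recourse inequalities $\bar\eta^k\ge\mathbf d^\top\bar{\mathbf y}^{i,k}$ with $\bar{\mathbf y}^{i,k}$ feasible for scenario $i$ actually available. I would handle this via Assumption~\ref{ass:qc_penalty_exact}, which makes a global QUBO minimizer feasible for the encoded master. Exactness of the encoding, $L_k\Delta_k=0$, makes the encoded master coincide with the RMP, including the requirement that the slack equalities reproduce~\eqref{eq:RMP_2}--\eqref{eq:RMP_3}. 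A secondary point is that the stated tolerance $\varepsilon$ is arbitrary, so the conclusion is optimality regardless of $\varepsilon$. This is already covered, because the repeat argument shows the gap closes to exactly zero.
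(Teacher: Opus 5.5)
Your argument is sound in its main line, but it reaches finiteness by a different route from the paper. The paper's proof reads Corollary~\ref{cor:exact_master} as saying that QCCG \emph{is} classical CCG. It then imports termination, and optimality through the $\varepsilon=0$ clause, from Theorem~\ref{thm:finite_conv}. Hypothesis~3 is never invoked explicitly, and the finiteness rests entirely on the appendix proof of the classical theorem.

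You use the corollary only to certify that each decoded point is an exact RMP optimum with $\overline{LB}_k^\delta=v_k^{\mathrm{RMP}}$. You then run the repeated-scenario argument directly on Algorithm~\ref{alg:hybrid_qcccg}. This makes hypothesis~3 do the work it is stated for, gives an explicit bound of $S+1$ iterations, and handles Algorithm~\ref{alg:hybrid_qcccg} as written. The paper's identification silently needs one extra observation: the running-minimum test on $\overline{UB}_k$ fires no later than the test on the current $UB_k$ in Algorithm~\ref{alg:ccg}. The paper's route is shorter; yours is self-contained.

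The step that fails is your closing claim that optimality holds ``regardless of $\varepsilon$''. The repeat argument shows that a repeated scenario forces $\overline{UB}_k\le\overline{LB}_k^\delta$. It does not show that termination can only occur at a repeat. For $\varepsilon>0$, the test \eqref{eq:hybrid_stop_qc_rewrite} may fire at an iteration that produced a new scenario and still has a strictly positive gap. For example, with $\varepsilon$ very large the method stops at $k=1$ with the optimum of the one-scenario RMP. In that case the inequality $\overline{UB}_K\le\overline{LB}_K^\delta$ in your final chain is unavailable.

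The correct conclusion has three parts. Finite termination holds for every $\varepsilon\ge 0$, since a repeat gives a gap $\le 0\le\varepsilon$. Exact optimality holds when $\varepsilon=0$, because the stopping test itself then supplies $\overline{UB}_K\le\overline{LB}_K^\delta$. Otherwise you only get $\varepsilon$-optimality via Theorem~\ref{thm:qc_eps}. This is the same $\varepsilon=0$ caveat the paper's proof inherits from Theorem~\ref{thm:finite_conv}.

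A smaller imprecision: $L_k\Delta_k=0$ does not make the encoded master \emph{coincide} with the RMP; it only equates optimal values. What you need is weaker and true. Any $\mathbf v$ with $\mathbf G_k\mathbf v=\mathbf g_k$ decodes to an RMP-feasible point, because the encoded slacks and recourse values are nonnegative. Then $\hat v_k=\widetilde v_k^{\mathrm{RMP}}=v_k^{\mathrm{RMP}}$ makes that point optimal.
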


\begin{proof}
    Since the restricted master and the subproblem are solved exactly at every iteration, Corollary~\ref{cor:exact_master} implies that the QCCG algorithm reduces to the classical CCG method. By Theorem~\ref{thm:finite_conv}, the algorithm must terminate after finitely many iterations.
\end{proof}

The tolerance $\delta_k$ incorporates contributions from quantum suboptimality, discretization of the continuous master variables, and penalty-model mismatch. In particular, when the quantum master is implemented through quantum annealing, the error term $\delta_k$ can be decomposed as:
\begin{align}
\label{eq:tau_decomposition_qc}
    & \delta_k = \delta_k^{\mathrm{opt}} + \delta_k^{\mathrm{disc}} + \delta_k^{\mathrm{pen}},
\end{align}
where $\delta_k^{\mathrm{opt}}$ bounds optimization error in the QUBO solver, $\delta_k^{\mathrm{disc}}$ bounds encoding error due to finite binary resolution, and $\delta_k^{\mathrm{pen}}$ bounds the mismatch between the penalized QUBO and the original constrained master.

\subsection{Growth of the Encoded Master}
\label{subsec:size_growth_top}
The key distinction between hybrid quantum Benders decomposition and hybrid QCCG lies in how the master problem evolves over time. In Benders decomposition, the master decision variables remain fixed, and the problem is progressively refined through the addition of cuts. In contrast, in CCG, each newly generated scenario introduces an additional scenario-dependent recourse block along with new epigraph and recourse constraints. Consequently, the number of binary variables in the encoded master grows approximately as:
\begin{align}
\label{eq:size_growth_top}
    & \widetilde N_k \approx n + L_\eta + k \sum_{r=1}^{m} L_r + N_{\mathrm{slack}}(k),
\end{align}
where $\sum_{r=1}^{m}L_r$ is the number of bits needed for one recourse block, and $N_{\mathrm{slack}}(k)$ counts the slack bits introduced for constraint equalities. Equation \eqref{eq:size_growth_top} highlights the central trade-off of the method. Adding scenarios strengthens the master and improves its approximation of the robust problem, but it also enlarges the QUBO and increases the difficulty of the quantum solve. The effectiveness of the hybrid algorithm therefore depends on both sides of the decomposition: the classical subproblem must generate informative worst-case scenarios, while the quantum optimizer must remain effective as the encoded master grows.

\subsection{Hybrid  QCCG Algorithm}
\label{subsec:hybrid_ccg_alg}
At each iteration, the current restricted master is constructed from the generated scenario set, encoded into binary form, and transformed into the QUBO problem. A quantum optimizer is then used to approximately solve this QUBO, and the decoded solution provides a candidate first-stage decision $\bar{\mathbf x}^k$, along with a candidate epigraph value $\bar\eta^k$ and scenario-wise recourse values. The subproblem is then solved classically at $\bar{\mathbf x}^k$ to obtain the true worst-case recourse value $\Phi(\bar{\mathbf x}^k)$ and a new worst-case scenario $\boldsymbol\xi^{k+1}$. This new scenario is appended to the master, and the process repeats.

\begin{algorithm}[t]
\caption{Hybrid Quantum--Classical CCG}
\label{alg:hybrid_qcccg}
\begin{algorithmic}[1]
\STATE \textbf{Input:} tolerance $\varepsilon \ge 0$, initial scenario $\mathcal U_1=\{\boldsymbol\xi^1\}\subseteq\mathcal U$.
\STATE Initialize $\overline{UB}_0 = +\infty$ and $\bar{\mathbf x}^{\mathrm{best}} \leftarrow \varnothing$.
\FOR{$k=1,2,\dots$}
    \STATE Construct the current RMP \eqref{eq:RMP} over $\mathcal U_k$.
    \STATE Encode the RMP as the QUBO \eqref{eq:qc_qubo_final}.
    \STATE Approximately solve the QUBO to obtain $\mathbf v^{k,\mathrm Q}$.
    \STATE Decode $\mathbf v^{k,\mathrm Q}$ into $(\bar{\mathbf x}^k,\bar\eta^k,\bar{\mathbf Y}_k)$.
    \STATE \emph{Feasibility check}:
    \IF{\emph{decoded candidate is infeasible}}
        \STATE \emph{\textbf{reject} the candidate}.
        \STATE \emph{\textbf{continue} to the next master-solve attempt}.
    \ENDIF
    \STATE Solve the exact subproblem~\eqref{eq:true_worst_case_qc} at $\bar{\mathbf x}^k$ to compute $\Phi(\bar{\mathbf x}^k)$ and a worst-case scenario $\boldsymbol\xi^{k+1}$.
    \STATE Set $U_k^{\mathrm{cand}} \leftarrow \mathbf c^\top \bar{\mathbf x}^k + \Phi(\bar{\mathbf x}^k)$.
    \STATE Update $\overline{UB}_k \leftarrow \min\{\overline{UB}_{k-1},\, U_k^{\mathrm{cand}}\}$.
    \IF{$\overline{UB}_k = U_k^{\mathrm{cand}}$}
        \STATE $\bar{\mathbf x}^{\mathrm{best}} \leftarrow \bar{\mathbf x}^k$.
    \ENDIF
    \STATE Compute $\overline{LB}_k^\delta \leftarrow \mathbf c^\top \bar{\mathbf x}^k + \bar\eta^k - \delta_k$.
    \IF{$ \overline{UB}_k - \overline{LB}_k^{\delta} \le \varepsilon$}
        \STATE \textbf{terminate} and return $\bar{\mathbf x}^{\mathrm{best}}$.
    \ENDIF
    \STATE Update $\mathcal U_{k+1}=\mathcal U_k \cup \{\boldsymbol\xi^{k+1}\}$. 
\ENDFOR
\end{algorithmic}
\end{algorithm}

\begin{remark}
    \emph{In the proposed framework, the quantum optimizer serves as an inexact master solver and is not required to compute a globally optimal solution of the RMP at each iteration. Instead, its role is to generate high-quality candidate first-stage decisions that guide the search for critical worst-case scenarios. Certification remains classical, and the upper bound is obtained from the exact evaluation of the decoded candidate, while the lower bound is obtained by correcting the decoded master objective with a chosen inexactness tolerance. Hence, this design preserves the correctness and certification structure of classical CCG while using a quantum optimizer as a heuristic master solver.}
\end{remark}
 
\section{Numerical Results}
\label{sec:Result}
This section evaluates the performance of the proposed hybrid QCCG framework in terms of both solution quality and computational efficiency. To this end, we consider a two-stage adaptive robust location–transportation problem, in which a set of candidate facilities must be selected and sized to serve uncertain customer demand. First-stage decisions determine which facilities to open and their installed capacities, while second-stage decisions allocate transportation flows after demand realization. 

\subsection{Robust Location-transportation Problem}
Let $I$ denote the set of candidate facilities and $J$ the set of customers. The first-stage decision variables include $x_i \in \{0,1\}$ indicating whether facility $i \in I$ is opened, and the capacity decision variable $z_i \in \mathbb R_+$ representing the installed capacity at facility $i$. The second-stage decision variable $y_{ij} \in \mathbb R_+$ represents the amount of goods transported from facility $i$ to customer $j$ after demand is realized. Opening facility $i \in I$ incurs a fixed cost $f_i$, while installing capacity at that facility incurs a unit cost $a_i$. The transportation cost from facility $i$ to customer $j$ is denoted by $c_{ij}$, and the maximum capacity that can be installed at facility $i$ is $K_i$.
In practice, the customer's demand is uncertain before any facility is built and capacity is installed. To account for this uncertainty, the demand $\mathbf d$ is assumed to belong to an uncertainty set $\mathcal{D}$. We adopt a budgeted uncertainty set defined as follows:
\begin{align}
    & \!\!\! \mathcal{D} = \Big\{\mathbf{d} \! : \!d_j = \underline{d}_j + g_j \widetilde{d}_j, g_j \in \left[0, 1\right], \! \sum_{j} \!g_j \!\leq\! \Gamma, \forall j \Big\}, 
\end{align}
where $d_j$ corresponds to the nominal demand, $\widetilde{d}_j$ defines its maximum possible deviation, and $\Gamma$, a user-specified integer parameter, limits the uncertainty budget and thereby adjusts the level of conservatism. The resulting two-stage robust location--transportation problem can therefore be written as:
\begin{subequations}
\begin{align}
\label{eq:2SRO_LTP}
    & \!\!\!\!\min_{(\mathbf x, \mathbf z) \in \mathcal{S}} 
    \sum_{i\in I} \!f_i x_i + \sum_{i\in I} \! a_i z_i 
    + \max_{d \in \mathcal{D}} \min_{\mathbf y \in \mathcal{S}_y} 
    \sum_{i\in I}\sum_{j\in J} \!c_{ij} y_{ij},
\end{align}
where the first-stage feasible set is defined as:
\begin{align}
\label{eq:2SRO_LTP_S1}
    & \!\!\! \mathcal{S} \!=\! \left\{ (\mathbf y, \mathbf z): \! 
    z_i \le K_i y_i, y_i \in \{0,1\}, z_i \ge 0, \forall i \in I \right\},
\end{align}
and the second-stage feasible set is given by:
\begin{align}
\label{eq:2SRO_LTP_S2}
    & \!\!\!\! \mathcal{S}_y = \bigg\{\mathbf y \!:\!
    \sum_{j\in J} \! y_{ij} \le z_i, \forall i;
    \sum_{i\in I} \! y_{ij} \ge d_j, \forall j;
    y_{ij} \!\ge\! 0 \bigg\}.
\end{align}
\end{subequations}
The objective function~\eqref{eq:2SRO_LTP} minimizes the total system cost, including facility opening costs, capacity installation costs, and transportation costs. The first-stage constraints in~\eqref{eq:2SRO_LTP_S1} ensure that capacity is installed only at selected facilities and that it does not exceed the corresponding maximum limits. The second-stage feasible set~\eqref{eq:2SRO_LTP_S2} enforce that shipments from each facility do not exceed its installed capacity, while ensuring that the demand of each customer $j$ is satisfied.

\subsection{Performance Evaluation}
The proposed QCCG algorithm is compared with the classical CCG method, as well as classical and quantum-assisted Benders decomposition. In the hybrid approach, the RMP is reformulated as a QUBO model and solved using a quantum annealing solver from D-wave\footnote{\url{https://www.dwavequantum.com/}}, while the  subproblem is solved using Gurobi solver\footnote{\url{https://www.gurobi.com/}}. The experiments are conducted on instances of the location--transportation problem with varying numbers of facilities and customers.

The parameters of the problem are randomly generated to create various test instances. The basic demand $\underline d_j$ for each customer $j\in J$ is sampled uniformly from the interval $[10,500]$. The maximal deviation of demand is defined as $\widetilde d_j = \alpha \underline d_j$, where $\alpha$ is randomly drawn from the interval $[0.1,0.5]$, representing the relative level of uncertainty in demand. 
The maximum installable capacity at each facility $K_i$ is randomly generated from the interval $[200, 850]$, while ensuring that the total capacity is sufficient to satisfy the maximum possible demand. The fixed facility opening cost $f_i$ is drawn from $[100, 1000]$, while the unit capacity installation cost $a_i$ is sampled from $[10, 100]$. The transportation cost between facility $i$ and customer $j$, denoted by $c_{ij}$, is generated uniformly from $[1, 1000]$. To guarantee feasibility of the transportation decisions in the second stage, the generated capacities satisfy the condition $\sum_{i\in I}K_i \ge \max_{\mathbf d \in \mathcal D} \sum_{j\in J} d_j$. This ensures that the total available capacity is sufficient to meet the worst-case demand realization. 

For each problem size, 10 independent random instances are generated, and average performance is reported. For each instance, we solve both the classical CCG method (baseline) and the proposed QCCG method. We also compare our proposed QCCG method with Benders' decomposition \cite{ZengZhao2013}. All classical optimization problems are solved using Gurobi. The algorithms terminate when the optimality gap falls below a predefined tolerance; in our experiments, we set $\varepsilon = 10^{-5}$.
\vspace{-1cm}
\begin{figure}[h!]
     \subfigure[CCG: $(3 \times 3)$]{
        \includegraphics[width=0.225\textwidth]{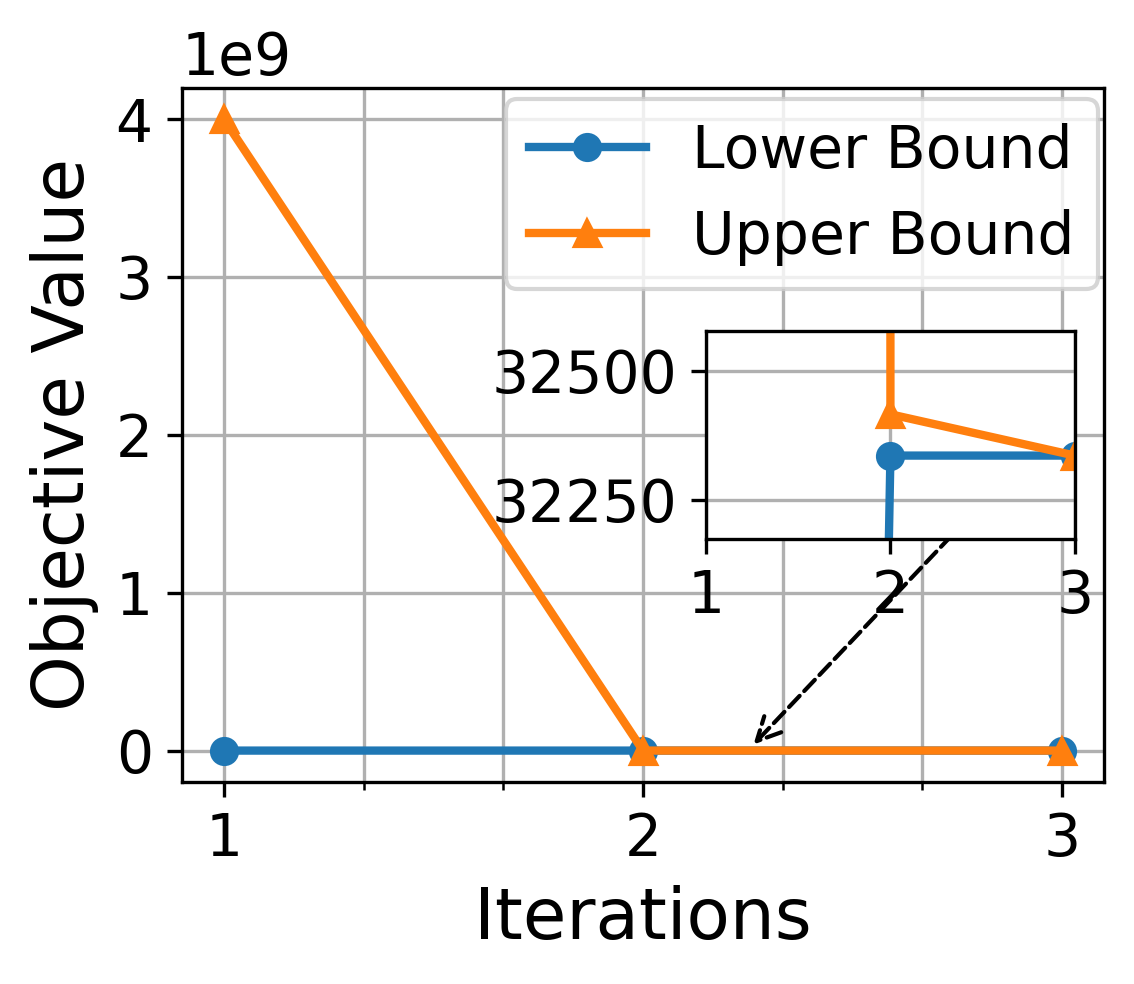}
	    \label{fig:CCG} }
    \subfigure[QCCG: $(3 \times 3)$]{
	\includegraphics[width=0.225\textwidth]{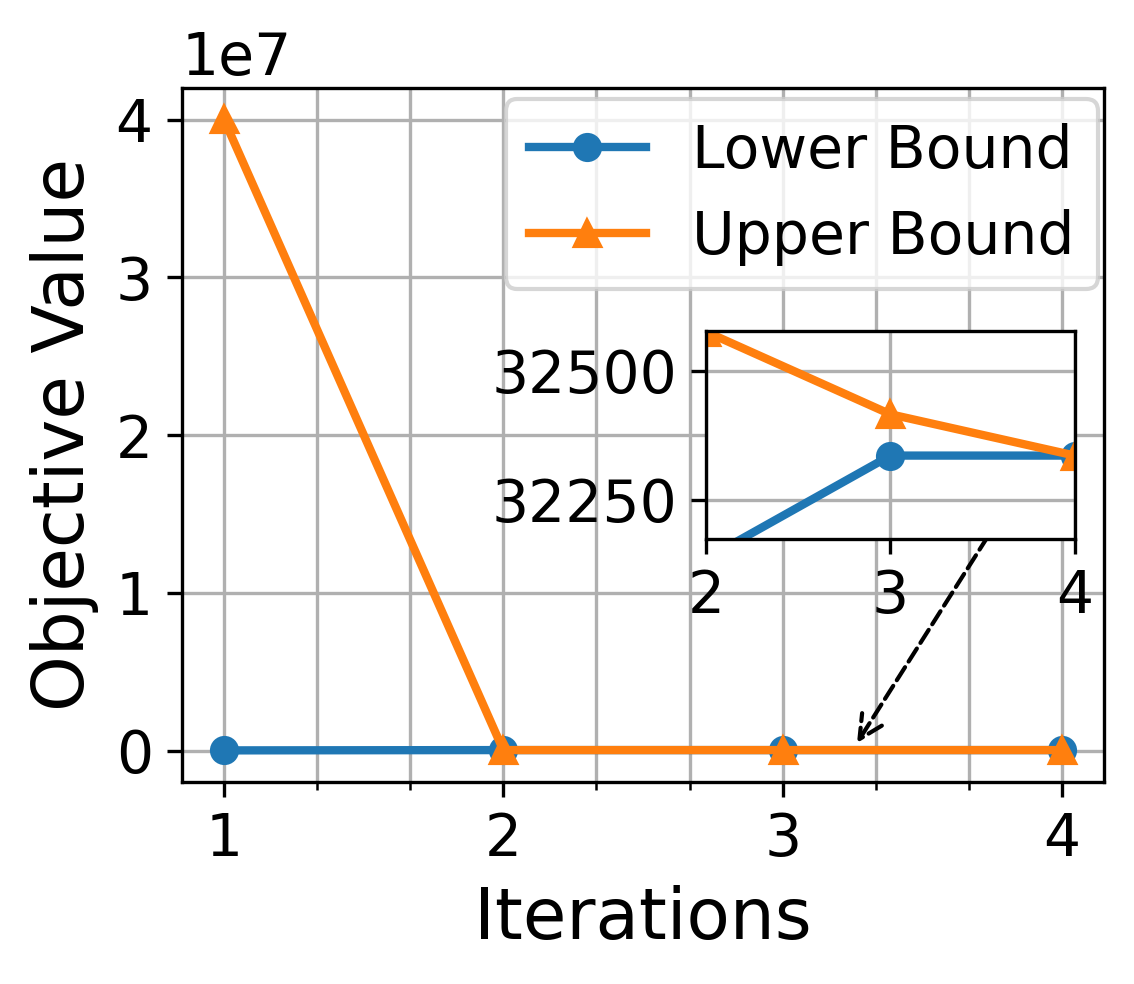}
	    \label{fig:QCCG} } 
    \subfigure[CCG: $(20 \times 20)$]{
        \includegraphics[width=0.225\textwidth]{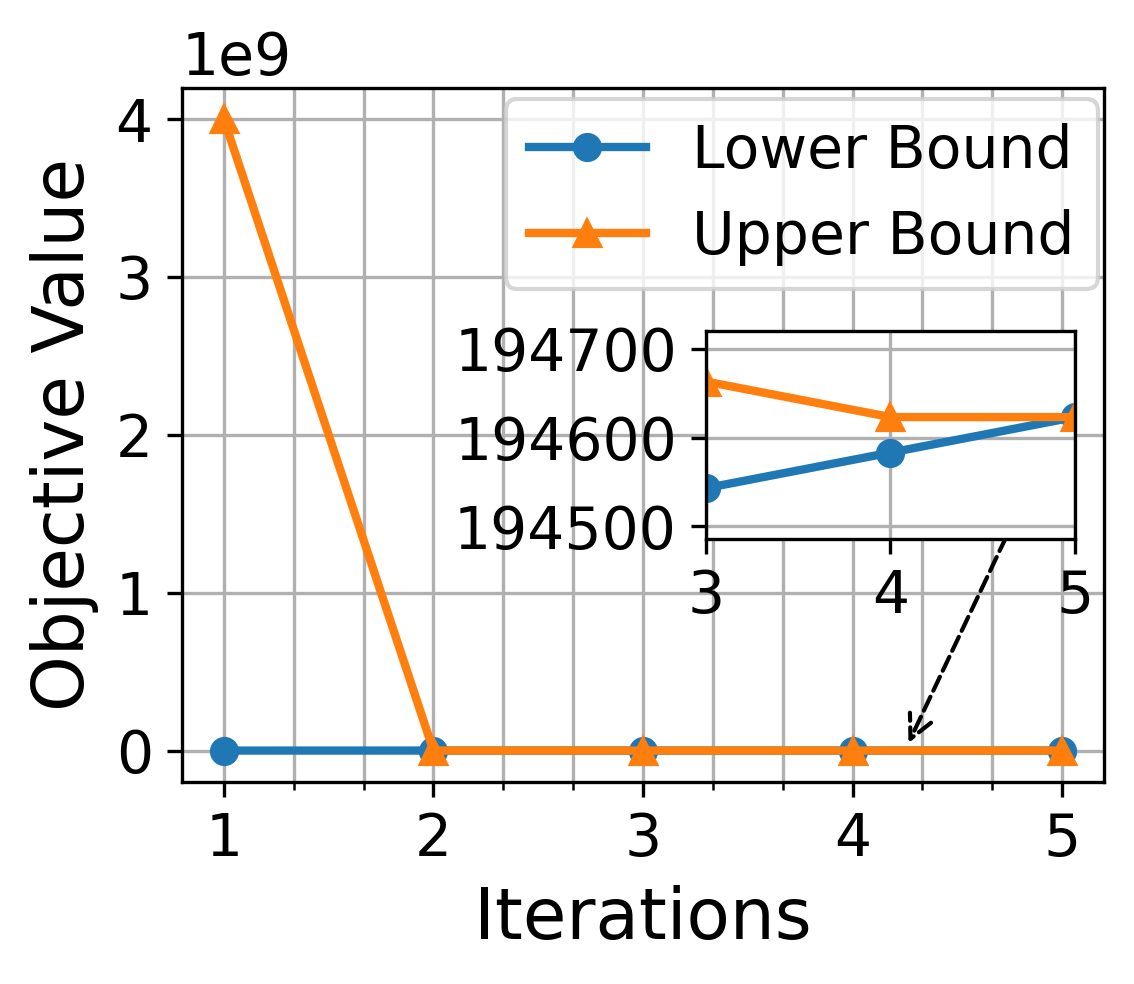}
	    \label{fig:CCG_20} }
    \subfigure[QCCG: $(20 \times 20)$]{
	\includegraphics[width=0.225\textwidth]{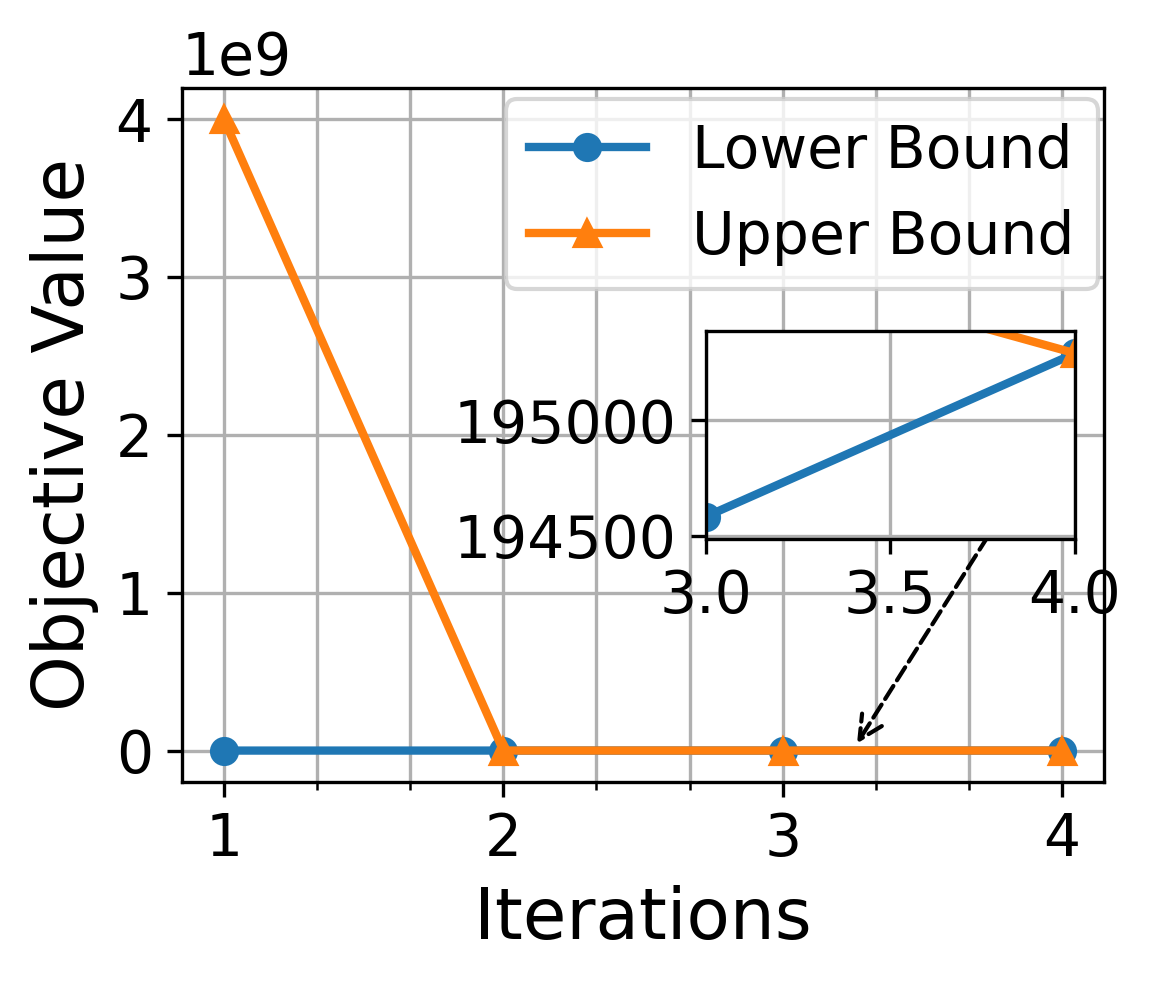}
	    \label{fig:QCCG_20} } 
	\caption{Convergence curve of CCG and QCCG.}
	\label{fig:Convergence}
    \vspace{-0.2cm}
\end{figure}

Figs.~\ref{fig:CCG}-\ref{fig:QCCG} illustrate the convergence behavior of classical CCG and the proposed QCCG algorithm for a small location--transportation instance with $3$ candidate facilities and $3$ customers. In both methods, the lower bound increases monotonically while the upper bound decreases, eventually converging to the same optimal value of approximately $32336$, which is consistent with the theoretical convergence properties of the CCG framework. It implies that our proposed QCCG algorithm is able to recover the same optimal robust solution as the classical CCG approach. 

Although both algorithms ultimately reach the same optimal value, their convergence behavior across iterations differs slightly. The classical algorithm converges in three iterations, while QCCG requires four iterations to reach the optimality gap. This difference arises from the fact that the RMP is solved using a quantum-assisted optimization procedure; the lower bound may not improve as aggressively in early iterations due to approximation errors introduced by the QUBO reformulation and quantum annealing process. Consequently, the subproblem may generate additional worst-case scenarios before convergence is achieved. 

Despite these inexact master solutions, the algorithm continues to refine the solution as additional scenarios are generated and eventually converges to the same optimal objective value, indicating that the proposed hybrid framework preserves the convergence properties of the classical CCG method.
The proposed QCCG framework also maintains the convergence behavior of the classical CCG algorithm even when the master problem is solved approximately. In particular, QCCG effectively identifies the critical uncertainty scenarios that define the worst-case cost, enabling the bounds to converge within a small number of iterations. It is also worth noting that the classical CCG method does not always converge in fewer iterations than QCCG, as illustrated in Figs.~\ref{fig:CCG_20}--\ref{fig:QCCG_20} for the 20 $\times$ 20 instance.

\begin{table}[t]
\centering
\caption{Comparison of decomposition algorithms ($3 \times 3$).}
\label{tab:method_comparison}
\begin{tabular}{lccc}
\hline
Method & Final objective & Iterations & Master solve \\
\hline
Classical Benders & 32336 & 6 & Classical \\
QC--Benders & 32336 & 6 & Quantum \\
Classical CCG & 32336 & 3 & Classical \\
Our QCCG & 32336 & 4 & Quantum \\
\hline
\end{tabular}
\end{table}

Table~\ref{tab:method_comparison} compares the convergence of different decomposition approaches. All methods reach the same final objective value of $32336$, confirming recovery of the same optimal robust solution for the location–transportation problem under the given uncertainty set. However, convergence speeds differ significantly. Classical Benders and QC--Benders each require six iterations to close the optimality gap, whereas classical CCG converges in three iterations and QCCG in four. This reflects the nature of Benders decomposition, where the master problem is iteratively refined through cut generation, often requiring multiple iterations before the critical cuts are identified. In contrast, CCG-based methods converge faster by directly identifying worst-case uncertainty realizations rather than approximating the recourse function. This highlights the structural advantage of CCG for 2SRO problems, reducing the number of iterations required for convergence. The proposed QCCG method preserves this advantage while replacing the classical master solver with a QUBO-based quantum optimization solver. Thus, QCCG maintains the convergence efficiency of classical CCG while providing a new approach for integrating quantum optimization into robust decomposition methods. To further evaluate the hybrid QCCG algorithm, we conduct experiments on scalability, the impact of the uncertainty budget, and the computational characteristics of the QUBO-based master problem, extending beyond the illustrative $3\times3$ instance to larger location–transportation problems.

Table~\ref{tab:scalability} compares the performance of classical and quantum-assisted decomposition methods on location--transportation instances of increasing size. For the small $3 \times 3$ instance, all algorithms converge to the same optimal objective value. However, the number of iterations required by the CCG-based methods is smaller than that of the Benders-based methods. This behavior is consistent with previous observations that CCG algorithms typically converge in fewer iterations due to the direct generation of significant scenarios. The results also show that the proposed QCCG framework preserves the convergence properties of the classical algorithm while enabling the use of quantum optimization to solve the combinatorial master problem. 

We evaluate solution quality by computing the optimality gap and the quantum error. The optimality gap is defined as: 
\begin{align}
    & \mathrm{Gap} = \frac{|UB-LB|}{|UB|},
\end{align}
which measures the distance between the lower and upper bounds at termination. The quantum error is defined as: 
\begin{align}
    & \mathrm{Error} = \left| \frac{LB_{\mathrm{QC}} - LB_{\mathrm{exact}}}{LB_{\mathrm{exact}}} \right|,
\end{align}
where $LB_{\mathrm{exact}}$ denotes the objective value obtained by the corresponding classical algorithm and $LB_{\mathrm{QC}}$ denotes the objective value obtained by the quantum-assisted algorithm. This measures the relative deviation of the quantum-assisted solution from the classical benchmark.

As shown in Table~\ref{tab:scalability}, all methods achieve either zero or very small optimality gaps at termination, indicating that both classical and quantum-assisted methods can approximate the optimal robust solution effectively. For smaller instances, the quantum-assisted methods recover the same objective value as the classical methods, resulting in zero solution error. For larger instances, the errors remain small, indicating QCCG is able to produce solutions of comparable quality while using a quantum master solver. 
In addition, QCCG requires less computational time than the classical CCG for all instances. For example, for the $10 \times 10$ instance, the classical CCG algorithm requires $0.345$ seconds, while the QCCG method solves the problem in $48.01$ milliseconds. For the $20 \times 20$ instance, QCCG reduces the number of iterations from $5$ to $4$ and reduces the computational time from $3.339$ seconds to $47.92$ milliseconds. These results suggest that the quantum-assisted master solver can reduce the computational effort required to solve the master problem. Also, the CCG-based methods typically require fewer iterations than the Benders-based methods, while achieving the same or nearly identical objective values.
\begin{table}[t]
\centering
\setlength{\tabcolsep}{4pt}
\caption{Performance comparison of classical and quantum-assisted decomposition algorithms.}
\label{tab:scalability}
\begin{tabular}{cccccc}
\hline
Size & Methods & Optimal Gap(\%) & Error(\%) & Iters & Time (s) \\
\hline

\multirow{4}{*}{$3\!\times\!3$}
& C-Benders  & 0 & \multirow{2}{*}{0} & 6 & 0.088\\
& QC-Benders & 0 &                    & 6 & 0.144 \\
& CCG    & 0 & \multirow{2}{*}{0} & 3 & 0.111 \\
& QCCG   & 0 &                    & 4 & 0.016 \\
\hline

\multirow{4}{*}{$5\!\times\!5$}
& C-Benders  & 0                       & \multirow{2}{*}{0} & 8 & 0.186 \\
& QC-Benders & 0                       &                    & 8 & 0.192 \\
& CCG    & 0                       & \multirow{2}{*}{0} & 3 & 0.117 \\
& QCCG   & $1.2\!\times\!10^{-5}$  &                    & 4 & 0.025 \\
\hline

\multirow{4}{*}{$8\!\times\!8$}
& C-Benders  & 0                      & \multirow{2}{*}{0}    & 8 & 0.208 \\
& QC-Benders & $1.3\!\times\!10^{-5}$ &                       & 8 & 0.224 \\
& CCG    & 0                      & \multirow{2}{*}{4.39} & 3 & 0.195\\
& QCCG   & $1.2\!\times\!10^{-5}$ &                       & 3 & 0.048 \\
\hline

\multirow{4}{*}{$10\!\times\!10$}
& C-Benders  & 0                      & \multirow{2}{*}{0}  & 10 & 0.236\\
& QC-Benders & $1.0\!\times\!10^{-5}$ &                     & 10 & 0.256 \\
& CCG    & 0                      & \multirow{2}{*}{0.4} & 3 & 0.345\\
& QCCG   & $1.2\!\times\!10^{-3}$ &                     & 4  & 0.048 \\
\hline

\multirow{4}{*}{$20\!\times\!20$}
& C-Benders  & 0                       & \multirow{2}{*}{0.48} & 45 & 2.453 \\
& QC-Benders & 0.772                   &                       & 16 & 0.448 \\
& CCG    & 0                       & \multirow{2}{*}{0.34} & 5  & 3.339 \\
& QCCG   & $5.12\!\times\!10^{-4}$ &                       & 4  & 0.048\\
\hline
\end{tabular}
\end{table}
We further analyze the size of the QUBO master problem arising in QCCG. In this framework, continuous variables are discretized and encoded into binary variables, yielding a QUBO formulation solved via a quantum annealer. The QUBO size depends on the number of first-stage variables and the resolution of the binary encoding. Table~\ref{tab:qubo_size} reports the number of binary variables and the corresponding QUBO matrix size for the $8 \times 8$ instance. The results show that the QUBO size increases with the number of facilities and the number of scenarios included in the RMP. As additional uncertainty scenarios are introduced during CCG iterations, the master problem expands, leading to a larger QUBO model.Despite this growth, the quantum solver handles the resulting problem sizes efficiently. Overall, these results demonstrate that the QCCG framework can effectively integrate quantum optimization into decomposition-based robust optimization methods.

\begin{table}[t]
\centering
\caption{Size of the QUBO master problem in QCCG for instance size $8 \times 8$.}
\label{tab:qubo_size}
\begin{tabular}{ccc}
\hline
Iterations & Binary variables & QUBO size \\
\hline
1 & 17 & $17 \times 17$ \\
2 & 81 & $81 \times 81$ \\
3 & 145 & $145 \times 145$ \\
\hline
\end{tabular}
\end{table}

\section{Conclusion} 
\label{sec:Conclusion} 
    This paper developed a hybrid quantum–classical column-and-constraint generation (QCCG) framework for solving 2SRO problems. The proposed approach integrates a QUBO-based quantum master solve with a classical subproblem, preserving the decomposition structure of classical CCG while enabling quantum-assisted optimization in the master stage. Our key contribution is to develop a constraint-preserving QUBO reformulation for the restricted master problem, together with a rigorous bound-adjustment mechanism that accounts for discretization error, penalty-model mismatch, and quantum optimization error. This allows the algorithm to maintain valid lower and upper bounds and ensures solution certification despite inexact quantum master solves. We further showed that the proposed framework generalizes classical CCG and reduces to it when the master problem is solved exactly, thereby retaining its convergence guarantees. Numerical experiments on two-stage robust location–transportation problems demonstrate that the proposed QCCG framework achieves solution quality comparable to classical methods while reducing the computational effort associated with solving large mixed-integer master problems. Thus, our work provides a principled framework for integrating QUBO-based quantum optimization into structured optimization methods with theoretical guarantees.

\bibliographystyle{IEEEtran}
\bibliography{Refs}


\end{document}